\tikzset{auto}
\tikzset{shorten >=1pt, >=stealth}
\newcommand{\nat}{\mathbb N}
\newcommand{\playera}{Player~$0$}
\newcommand{\playerb}{Player~$1$}
\newcommand{\Pos}{\mathit{Pos}}
\newcommand{\Neg}{\mathit{Neg}}
\newcommand{\Ex}{\mathit{Ex}}
\newcommand{\Un}{\mathit{Un}}
\newcommand{\exampleend}{\hfill\tikz[baseline] \draw (0, 0) -| ++(1ex, 1.25ex);}
\newcommand{\consynth}{\textit{CONSYNTH}\xspace}
\newcommand{\dtsynth}{\textit{DT-Synth}\xspace}
\newcommand{\satsynth}{\textit{SAT-Synth}\xspace}
\newcommand{\rpnisynth}{\textit{RPNI-Synth}\xspace}
\newtheorem{definition}{Definition}
\newtheorem{example}{Example}
\newtheorem{theorem}{Theorem}
\newtheorem{lemma}{Lemma}
\begin{document}

\title{Learning-Based Synthesis of Safety Controllers}

\author{\IEEEauthorblockN{Daniel Neider}
\IEEEauthorblockA{Max Planck Institute for Software Systems \\
Kaiserslautern, Germany \\
Email: neider@mpi-sws.org}
\and
\IEEEauthorblockN{Oliver Markgraf}
\IEEEauthorblockA{Max Planck Institute for Software Systems \\
Kaiserslautern, Germany}}

\maketitle

\begin{abstract}
We propose a machine learning framework to synthesize reactive controllers for systems whose interactions with their adversarial environment are modeled by infinite-duration, two-player games over (potentially) infinite graphs.
Our framework targets safety games with infinitely many vertices, but it is also applicable to safety games over finite graphs whose size is too prohibitive for conventional synthesis techniques.
The learning takes place in a feedback loop between a teacher component, which can reason symbolically about the safety game, and a learning algorithm, which successively learns an approximation of the winning region from various kinds of examples provided by the teacher.
We develop a novel decision tree learning algorithm for this setting and show that our algorithm is guaranteed to converge to a reactive safety controller if a suitable approximation of the winning region can be expressed as a decision tree.
Finally, we empirically compare the performance of a prototype implementation to existing approaches, which are based on constraint solving and automata learning, respectively.

\end{abstract}


\section{Introduction}
\label{sec:intro}

Reactive synthesis offers an effective and promising way to solve a crucial practical problem: constructing correct and verified controllers for safety-critical systems.
Rather than designing and implementing controllers by hand, reactive synthesis techniques construct controllers in an automatic fashion, thus, freeing engineers from this complex and error-prone task.
In addition to being fully automatic, synthesis techniques produce correct-by-construction controllers that guarantee to satisfy the given specification, or they report that no such controller exist.

Typically, reactive synthesis is modeled as an infinite-duration game on a graph that is played by two antagonistic players: the system, which seeks to satisfy the specification, and the environment, which wants to violate it.
More precisely, the specification and a model of the environment are in a first step converted into an infinite game.
Then, one computes a winning strategy for the system, which prescribes how the system needs to play in order to win against every move of the environment.
Finally, the winning strategy is translated into hard- or software, resulting in a reactive controller that satisfies the given specification.

In this paper, we focus on safety games, a class of infinite games that arises from safety specifications.
Such specifications are in fact among the most important in practice (e.g., see Dwyer, Avrunin, and Corbett~\cite{DBLP:conf/icse/DwyerAC99} for a survey of specification patterns) and capture many other interesting properties, including bounded-horizon reachability.
In contrast to the classical setting, however, we consider safety games not only over finite graphs but also over graphs with infinitely many (even uncountably many) vertices.
Such games arise naturally, for instance, when the interaction between the controlled system and its environment is too complex to be modeled by finite graphs (e.g., in motion planning over unbounded environments) or when the environment has access to dynamic data structures, such as lists, stacks, or queues.

When the number of vertices of a game graph is infinite, traditional methods, which typically rely on an explicit exploration of the (whole) graph, are no longer applicable.
To enable the computation of winning strategies for such games, our first contribution is a machine learning framework that constructs winning strategies by learning a proxy object called \emph{winning set}.
Intuitively, a winning set is an approximation of the vertices from which the system player can win and that permits to extract a winning strategy in a simple manner.

Our learning framework defines a feedback loop, akin to counterexample-guided inductive synthesis (CEGIS)~\cite{DBLP:conf/aplas/Solar-Lezama09}, consisting of two entities: a teacher, who can reason symbolically about the game, and a learning algorithm, whose goal is to learn a winning set from information provided by the teacher.
In every iteration of the loop, the learning algorithm constructs a set of vertices, which it proposes to the teacher.
The teacher, on the other hand, checks whether the proposed set is a winning set and stops the learning process if so.
If this is not the case, the teacher returns a counterexample.
Upon receiving a counterexample, the learner refines its conjecture and proceeds with the next iteration.

Motivated by recent success in using decision trees as a concise and effective representation of strategies in infinite games~\cite{DBLP:conf/tacas/BrazdilCKT18}, our second contribution is a new learning algorithm for decision trees, which is tailored specifically to the framework sketched above.
Our algorithm builds upon a recent learning algorithm for decision trees that has been proposed in the context of software verification~\cite{DBLP:journals/pacmpl/EzudheenND0M18} and that puts a strong emphasis on learning ``small'' trees.
As a consequence of the latter, our algorithm can in many situations guarantee to learn a winning set if one can be expressed as a decision tree.

Even though a game graph is infinite or prohibitively large, a reactive controller with a compact representation might already realize the specification.
In a motion planning scenario, for instance, the system often only needs to consider a small subset of possible interactions with the environment to satisfy the specification.
Based on this observation, our learning-based approach possesses various desirable properties:
\begin{enumerate*}[label={(\roman*)}]
	\item it leverages machine learning as an effective means to focus on the important parts of a game,
	\item it often learns ``small'' winning sets (and, by extension, small winning strategies) in a rule-like format, which tend to be relatively easy for humans to understand (see Brázdil et al.~\cite{DBLP:conf/tacas/BrazdilCKT18}), and
	\item besides operating over infinite graphs, it guarantees in many situations to learn a winning set if one exists.
\end{enumerate*}
In addition, we demonstrate empirically that our approach is highly competitive to existing tools on two sets of benchmarks taken from the literature.


\paragraph*{\bfseries Related Work}
Games over various types of infinite graphs have been studied, predominantly in the context of pushdown graphs~\cite{DBLP:conf/birthday/KupfermanPV10}.
For more general classes of graphs, a constraint-based approach~\cite{DBLP:conf/popl/BeyeneCPR14}, relying on constraint solvers such as Z3~\cite{DBLP:conf/tacas/MouraB08}, and various learning-based approaches have been proposed~\cite{DBLP:conf/wia/Neider10,DBLP:conf/tacas/NeiderT16} (we discuss these approaches in Section~\ref{sec:experiments}).
In the context of safety games over finite graphs, recent work~\cite{DBLP:conf/atva/Neider11} has demonstrated the ability of learning-based techniques to extract small controllers from precomputed controllers with a potentially large number of states.

Our learning framework is an extension of an earlier framework by Neider and Topcu~\cite{DBLP:conf/tacas/NeiderT16}.
Their work considers so-called rational safety games (defined in terms of finite automata) and proposes an automaton learning approach to infer winning strategies.
By contrast, we do not fix a specific representation of the game graph and only require that certain operations can be performed symbolically.
Many common formalisms such as finite automata, various types of decision diagrams, as well as formulas in the first-order theories of linear integer and real arithmetic satisfy these requirements.
However, we consider only finitely-branching game graphs, whereas Neider and Topcu also consider graphs with infinite branching.

The algorithm we design for learning decision trees builds on top of a learning algorithm recently proposed by Ezudheen et al.~\cite{DBLP:journals/pacmpl/EzudheenND0M18}, which learns from data in form of Horn clauses.
For this setting, other learning algorithms have been developed as well~\cite{DBLP:conf/tacas/ChampionC0S18,DBLP:conf/pldi/ZhuMJ18}.
We have chosen Ezudheen et al.'s algorithm specifically for its property to guarantee convergence to a solution in many practical scenarios.


\section{Controller Synthesis and Safety Games}
\label{sec:preliminaries}

We follow the game-theoretic approach to controller synthesis as popularized by McNaughton~\cite{DBLP:journals/apal/McNaughton93} and view the problem as an infinite-duration, two-player game on a directed graph.
Such games are played by two antagonistic players: \playera, who embodies the system, and \playerb, who embodies the environment.
In this setting, the type of specification dictates the type of game, and a winning strategy for \playera\ translates immediately into a controller that satisfies the given specification (we refer the reader to Grädel, Thomas, and Wilke~\cite{DBLP:conf/dagstuhl/2001automata} for a comprehensive discussion of this connection).
Since we are interested in synthesizing controllers from safety specifications, the remainder of this paper is concerned with so-called safety games.
However, before we introduce these types of games formally, let us first fix basic notations.

\paragraph*{Basic Notations}
Let $\mathbb B = \{ 0, 1 \}$ denote the set of Boolean values ($0$ representing $\mathit{false}$ and $1$ representing $\mathit{true}$), $\mathbb N$ the set of natural numbers, $\mathbb Z$ the set of integers, and $\mathbb R$ the set of real numbers.
Given a set $A$, we denote the set of all finite sequences of elements of $A$ by $A^\ast$ and the set of all infinite sequences by $A^\omega$.
Moreover, for a binary relation $R \subseteq X \times X$ and two sets $A, B \subseteq X$, the \emph{image of $A$ under $R$} is the set $R(A) = \{b \in B \mid \exists a \in A \colon (a, b) \in R \}$ and the \emph{preimage of $B$ under $R$} is the set $R^{-1}(B) = \{ a \in A \mid \exists b \in B \colon (a, b) \in R \}$.

\paragraph*{Safety Games}
Our definitions and notations mainly follow those of Grädel, Thomas, and Wilke~\cite{DBLP:conf/dagstuhl/2001automata}, and we refer the reader to this textbook for further details on infinite-duration games.
Formally, a \emph{safety game} is a five-tuple $\mathcal G = (V_0, V_1, E, I, F)$ consisting of two disjoint sets $V_0$, $V_1$ of vertices controlled by \playera\ and \playerb, respectively (we denote their union by $V = V_0 \cup V_1$ and assume $V \neq \emptyset$), a directed edge relation $E \subseteq V \times V$, a nonempty set $I \subseteq V$ of \emph{initial vertices}, and a set $F \subseteq V$ of \emph{safe vertices}.
The directed graph $(V, E)$ is typically called \emph{game graph}.
In contrast to the classical setting, we do not restrict $V$ to be finite but allow even uncountable sets.
However, we do make the following two restrictions to the edge relation: we assume that
\begin{enumerate*}[label={(\arabic*)}]
	\item every vertex has at least one outgoing edge (i.e., $E(\{ v \}) \neq \emptyset$ for each $v \in V$), and
	\item $E(\{ v \})$ is finite for every $v \in V$, though not necessarily bounded.
\end{enumerate*}
Note that the first restriction is standard and simply avoids situations in which the game gets stuck.
The second restriction, on the other hand, is required by our learning framework and ensures that the data to learn from is always a finite object.

A safety game is played in rounds: initially, a token is placed on one of the initial vertices $v_0 \in I$; in each round, the player controlling the current vertex then moves the token to the next vertex along one of the outgoing edges.
This process of moving the token is repeated ad infinitum and results in an infinite sequence $\pi = v_0 v_1 \ldots \in V^\omega$ with $v_0 \in I$ and $(v_i, v_{i+1}) \in E$ for every $i \in \nat$, which is called a \emph{play}.
The winner of a play is determined by the winning condition $F$ in that a play $\pi = v_0 v_1 \ldots$ is \emph{winning for \playera} if $v_i \in F$ for every $i \in \nat$---otherwise it is \emph{winning for \playerb}.

In the framework of infinite games, synthesizing a controller amounts to computing a so-called winning strategy for \playera, which prescribes how \playera\ needs to move in order to win a play. 
Formally, a \emph{strategy for \playera} is a function $\sigma \colon V^\ast \times V_0 \to V$ such that $(v_n, \sigma(v_0 \ldots v_n)) \in E$ for every $v_0 \ldots v_n \in V^\ast V_0$.
A strategy is called \emph{winning} if every play that is \emph{played according to $\sigma$} (i.e., that satisfies $v_{n+1} = \sigma(v_0 \ldots v_n)$ for all $n \in \nat$ with $v_n \in V_0$) is winning for \playera.
It is well known that safety games permit memoryless winning strategies where the choice of the next vertex depends only on the vertex the play has currently reached.
Such a strategy can then easily be implemented as a controller: the controller tracks the current vertex of a play and chooses the next move according to the strategy.
Hence, the objective in the remainder of this paper is to compute a memoryless winning strategy for \playera.
We refer to this as \emph{solving a game}.

If the game graph underlying a safety game is finite, memoryless winning strategies can be computed in linear time using a simple fixed-point computation~\cite{DBLP:conf/dagstuhl/2001automata}.
For infinite game graphs, on the other hand, this is no longer an option as a fixed-point computation might not converge in finite time.
To overcome this problem, we propose a novel learning-based approach that learns a (memoryless) winning strategy via a proxy object named winning set.

\paragraph*{Winning Sets}
Intuitively, a winning set is a subset of the safe vertices that contains all initial vertices and is a trap for \playerb\ (i.e., \playera\ can force any play to stay inside this set regardless of how \playerb\ plays).
Formally, we define winning sets as follows.

\begin{definition}[Winning set] \label{def:winning-set}
Let $\mathcal G = (V_0, V_1, E, I, F)$ be a safety game.
A \emph{winning set} is a set $W \subseteq V$ satisfying
\begin{enumerate*}[label={(\arabic*)}]
	\item \label{itm:winning-set:1} $I \subseteq W$,
	\item \label{itm:winning-set:2} $W \subseteq F$,
	\item \label{itm:winning-set:3} $E(\{ v \}) \cap W \neq \emptyset$ for all $v \in W \cap V_0$ (\emph{existential closedness}), and
	\item \label{itm:winning-set:4} $E(\{ v \}) \subseteq W$ for all $v \in W \cap V_1$ (\emph{universal closedness}).
\end{enumerate*}
\end{definition}

A winning set $W$ immediately provides a winning strategy for \playera: starting in $I \subseteq W$, \playera\ simply moves to a (fixed) successor vertex inside $W$ whenever it is his turn (note that this is possible since $W$ is existentially closed).
As $W$ is also universally closed, a straightforward induction over the length of plays proves that every play that starts inside $I$ and is played according to this strategy stays inside $W$, no matter how \playerb\ plays.
Thus, \playera\ wins since $W \subseteq F$.

In the remainder, we encourage the reader to think of a winning set as a (symbolic) representation of a winning strategy.
The following example, inspired by robotic motion planning, illustrates the concept of winning sets.

\begin{example} \label{ex:games-and-winning-regions}
Consider a robot in an unbounded, one-dimensional grid world as depicted in Figure~\ref{fig:Example}.
Moreover, let us assume that the robot's position, indicated by an ``\textsf{x}'' in Figure~\ref{fig:Example}, can be modeled as a single real-valued coordinate $x \in \mathbb R$, allowing for an uncountable number of positions.

\begin{figure}[t]
\centering
\begin{tikzpicture} 
  \tikzstyle{Unsafe} = [fill = white] 
  \tikzstyle{Safe} = [fill=white] 
  \tikzstyle{Initial} = [pattern=north east lines, pattern color=black!70]
  \tikzstyle{InitialNode} = [Initial,
    minimum width=0.8cm, minimum height=0.8cm, node distance=0.8cm, text width=0.65cm,align=right,text height=0.65cm]
  \tikzstyle{Node} = [Unsafe,
    minimum width=0.8cm, minimum height=0.8cm, node distance=0.8cm, text width=0.65cm,align=right,text height=0.65cm]
  \tikzstyle{SafeNode} = [Node, Safe]
  \node [label={[xshift=-0.5cm]-2}, name=n0, Node]{0} (0,0);
  \node[label={[xshift=-0.5cm]-1},name=n1, right of=n0, Node]{1} ;
  \node[label={[xshift=-0.5cm]0},name=n2, right of=n1, InitialNode]{0} ;

  \node[label={[xshift=-0.5cm]1},name=n3, right of=n2, SafeNode]{1};
  \node[label={[xshift=-0.5cm]2},name=n4, right of=n3, SafeNode]{0} ;
  \node [label={[xshift= 0.2cm,yshift=0.32cm]3},name=n4,right of =n3]{};
  \node [label={[xshift= -0.1cm,yshift=0.4cm]...},right of =n4]{};
  \node [label={[xshift= -0.1cm,yshift=0.4cm]...},left of =n0]{};

  \draw[gray!50] (2.4, .4) -- node[anchor=center, fill=white, inner sep=1pt, font=\large\sffamily, text=black] {x} ++(0, -.8);

  \draw [thick, black, dotted] (-0.4,0.4)  -- (-0.4,-0.4);
  \draw [thick, black, dotted] (0.4,0.4)  -- (0.4,-0.4);
  \draw [thick, black, dotted] (1.2,0.4)  -- (1.2,-0.4);
  \draw [thick, black, line width = 0.7mm] (1.2,0.41)  -- (1.2,-0.41);
  \draw [thick, black, dotted] (2.0,0.4)  -- (2.0,-0.4);
  \draw [thick, black, dotted] (2.8,0.4)  -- (2.8,-0.4);
  \draw [thick, black, dotted] (3.6,0.4)  -- (3.6,-0.4);
  \draw [thin, black] (-0.4,0.4)  -- (3.6,0.4);
  \draw [thick, black, line width = 0.7mm] (1.16,0.4)  -- (3.65,0.4);
  \draw [thick, black, line width = 0.7mm] (1.16,-0.4)  -- (3.65,-0.4);
  \draw [thin, black] (-0.4,-0.4)  -- (3.6,-0.4);
  \draw [thin, black,  dashed, line width = 0.7mm] (3.6,0.4)  -- (4.4,0.4);
  \draw [thin, black,  dashed, line width = 0.7mm] (3.6,-0.4)  -- (4.4,-0.4);
  \draw [thin, black,  dashed] (-0.4,0.4)  -- (-1.2,0.4);
  \draw [thin, black,  dashed] (-0.4,-0.4)  -- (-1.2,-0.4);
  
\end{tikzpicture}
\caption[A robot on the one-dimensional grid word described in Example~\ref{ex:games-and-winning-regions}]{A robot on the one-dimensional grid word described in Example~\ref{ex:games-and-winning-regions}. The division of the world into intervals is indicated by dotted ``boxes''. The number in the lower-right corner of each box displays the player having control over the robot in this interval. Safe vertices are surrounded by a bold line \tikz \draw[very thick] (0, 0) rectangle +(1.5ex, 1.5ex);, initial vertices are indicated by a diagonal pattern \tikz \draw[black!70, pattern=north east lines, pattern color=black!70] (0, 0) rectangle +(1.5ex, 1.5ex);.} \label{fig:Example}
\end{figure}
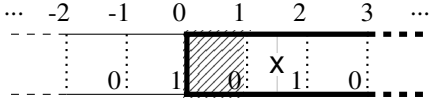

The robot's movement is controlled by two players, the system (\playera) and the environment (\playerb).
To decide which player is currently in control of the robot, we divide the world in infinitely many intervals $[n, n+1) \subset \mathbb R$ with $n \in \mathbb Z$: if the robot resides inside an interval $[n, n+1)$ with $n$ even, then \playera\ is in control; otherwise, \playerb\ is in control.
In every turn, we allow the system or the environment to move the robot one unit to the left (i.e., decreasing its position $x$ by one) or one unit to the right (i.e., increasing its position $x$ by one).
Note that this implies that the system and the environment are taking turns controlling the robot. 
We initially place the robot at an arbitrary position $x \in [0, 1)$ and define all positions $x \geq 0$ to be safe.
The objective of the system is to stay inside these safe positions at all times.

We can model the above situation as a safety game $\mathcal G = (V_0, V_1, E, I, F)$ as follows.
The set of \playera\ vertices is $V_0 = \bigcup_{n \in \mathbb Z} [2n, 2n+1)$, while the set of \playerb\ vertices is $V_1 = \bigcup_{n \in \mathbb Z} [2n-1, 2n)$; note that $V = \mathbb R$.
The set of safe vertices is $F = \{ x \in \mathbb R \mid x \geq 0 \}$ and the set of initial vertices is $I = [0, 1)$.
Finally, the edge relation is defined as $E = \{ (x, x+1) \in \mathbb R \times \mathbb R \mid x \in \mathbb R \} \cup \{ (x, x-1) \in \mathbb R \times \mathbb R \mid x \in \mathbb R \}$.

An example of a winning set is $W = [0, 3)$.
It clearly contains the set $I = [0, 1)$ of initial vertices.
Furthermore, if the position of the robot is in the interval $[0, 1)$ or $[2, 3)$, \playera\ can stay inside $W$ by moving the robot into the interval $[1, 2)$.
Similarly, if the robot is in the interval $[1, 2)$, \playerb\ has no choice but to move the robot either into the interval $[0,1)$ or into $[2,3)$, thus staying inside $W$.
Since $W$ is also contained in $F = \{ x \in \mathbb R \mid x \geq 0 \}$, it is in fact a winning set. 
As sketched above, a winning strategy is easy to derive. \exampleend
\end{example}

Finally, let us note that safety games as defined above subsume games on finite graphs, but one needs to choose a suitable symbolic representation in order to handle \mbox{(un-)countably} infinite game graphs.
We comment on this in detail after having introduced our machine learning framework in the next section.


\section{A Machine Learning Framework for Synthesizing Safety Controllers}
\label{sec:framework}

We now propose a machine learning framework for learning winning sets in safety games and, thus, reactive safety controllers.
Our framework is a generalization of earlier work by Neider and Topcu~\cite{DBLP:conf/tacas/NeiderT16}, which encodes (countably) infinite game graphs using finite automata and uses automaton learning to learn winning sets.
By contrast, the framework proposed here allows for game graphs with uncountably many vertices and is not restricted to a symbolic representation in terms of finite automata.

As illustrated in Figure~\ref{fig:learning-framework}, the learning takes place in a counterexample-guided feedback loop (CEGIS)~\cite{DBLP:conf/aplas/Solar-Lezama09} with two entities: a \emph{teacher}, who has knowledge about the safety game, and a \emph{learner} (or learning algorithm), whose objective is to learn a winning set, but who is agnostic to the game.
In every iteration of the loop, the learner conjectures a set $H \subseteq V$, called \emph{hypothesis}, based on the information about the game it has accumulated so far.
Then, the teacher checks whether this set $H$ is in fact a winning set---queries of this type are often called equivalence or correctness queries.
Although the teacher does not know a winning set (the task is to learn one after all), it can verify whether the hypothesis is one by checking Conditions~\ref{itm:winning-set:1} to \ref{itm:winning-set:4} of Definition~\ref{def:winning-set}.
If the hypothesis satisfies these conditions, then $H$ is a winning set and the learning stops.
If this is not the case, the teacher replies with a counterexample that witnesses the violation of one of these conditions.
Then, the feedback loop continues until a winning set has been found.
The definition below formalizes the concept of counterexamples and fixes the communication between the teacher and the learner.

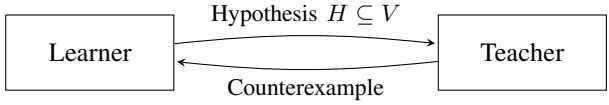
\begin{figure}[!t]
	\centering

	\begin{tikzpicture}
		\node[draw, minimum height=10mm, text width=20mm, align=center] (L) at (0, 0) {Learner};
		\node[draw, minimum height=10mm, text width=20mm, align=center] (T) at (5.75, 0) {Teacher};
		\draw[->] (L) to[in=174, out=6] node[font=\small] {Hypothesis $H \subseteq V$} (T);
		\draw[->] (T) to[in=-6, out=-174] node[font=\small] {Counterexample} (L);
	\end{tikzpicture}

	\caption{Learning framework for synthesizing safety controllers} \label{fig:learning-framework}
\end{figure}

\begin{definition}[Teacher for Safety Games] \label{def:teacher}
Let $\mathcal G = (V_0, V_1, E, I, F)$ be a safety game.
Given a hypothesis $H \subseteq V$, the teacher replies as follows (whereby the order in which the checks are performed is arbitrary):
\begin{enumerate}
	\item If $I \not\subseteq H$, then the teacher returns a \emph{positive counterexample} $v \in I \setminus H$.
	\item If $H \not\subseteq F$, then the teacher returns a \emph{negative counterexample} $v \in H \setminus F$.
	\item If there exists a $v \in H \cap V_0$ with $E(\{ v \}) \cap H = \emptyset$, then the teacher returns an \emph{existential counterexample} $v \rightarrow (v_1 \lor \ldots \lor v_n)$ with $\{ v_1, \ldots, v_n \} = E(\{ v \})$.
	\item If there exists a $v \in H \cap V_1$ with $E(\{ v \}) \not\subseteq H$, then the teacher returns a \emph{universal counterexample} $v \rightarrow (v_1 \land \ldots \land v_n)$ with $\{ v_1, \ldots, v_n \} = E(\{ v \})$.
\end{enumerate}
If $H$ passes all four checks, then the teacher returns ``yes''.
\end{definition}

Each case of Definition~\ref{def:teacher} corresponds to one condition of Definition~\ref{def:winning-set}.
Thus, it is not hard to verify that a given hypothesis is in fact a winning set if the teacher replies ``yes'' (as it satisfies Definition~\ref{def:winning-set}).
Counterexamples, on the other hand, witness the violation of one of these conditions and guide the learner towards a winning set by communicating exactly why the hypothesis is incorrect.
For instance, the meaning of a positive counterexample is that any future hypothesis needs to include this vertex (as it is initial), whereas a negative counterexample must be excluded (as it is not a safe vertex).
An existential counterexample $v \rightarrow (v_1 \lor \ldots \lor v_n)$ signals that the hypothesis is not existentially closed and requires that if a future hypothesis contains $v$, it also needs to contains an least one vertex of the vertices $v_1, \ldots, v_n$.
Similarly, a universal counterexample $v \rightarrow (v_1 \land \ldots \land v_n)$ signals that the hypothesis is not universally closed and requires that if a future hypothesis contains $v$, it needs to contains all vertices $v_1, \ldots, v_n$.
Note that existential and universal counterexamples are always finite objects since we assume $E(\{ v \})$ to be finite for every $v \in V$.

Let us illustrate the overall learning process with an example.

\begin{example} \label{ex:learning}
We continue Example~\ref{ex:games-and-winning-regions}.
All hypotheses produced in the course of the learning process are depicted in Figure~\ref{fig:LearningLoop}.
Gray shaded areas indicate vertices in the hypothesis, safe vertices are surrounded by a bold line, and initial vertices are indicated by diagonal lines.

\begin{figure}[t!h]
	\centering
\newcommand{\varname}{0.25}
	\begin{subfigure}[b]{0.4\textwidth}
		\centering

		\begin{tikzpicture} 
		  \tikzstyle{Initial} = [pattern=north east lines, pattern color=black!70]
  \tikzstyle{InitialNode} = [Initial, minimum width=0.6cm, minimum height=0.6cm, node distance=0.6cm, text width=0.4cm,align=right,text height=0.4cm]
			\tikzstyle{Unsafe} =[fill = white] 
			\tikzstyle{Safe} = [fill=white] 
			\tikzstyle{Node} = [Unsafe, minimum width=0.6cm, minimum height=0.6cm, node distance=0.6cm, text width=0.4cm,align=right,text height=0.4cm]
			\tikzstyle{SafeNode} = [Node, Safe]
			
			\node [label={[xshift=-0.3cm]-2},label={[xshift=-0.1cm]}, name=n0, Node]{0};
			\node[label={[xshift=-0.3cm]-1},name=n1, right of=n0, Node]{1} ;
			\node[label={[xshift=-0.3cm]0},name=n2, right of=n1, InitialNode]{0} ;
			\node[label={[xshift=-0.3cm]1},name=n3, right of=n2, Node]{1};
			\node[label={[xshift=-0.3cm]2},name=n4, right of=n3, Node]{0} ;
			\node [label={[xshift= -0.7cm,yshift=0.2cm]3},right of =n4]{};
			\node [label={[xshift= -0.3cm,yshift=0.2cm]...},right of =n4]{};
			\node [label={[xshift= 0.3cm,yshift=0.2cm]...},left of =n0]{};
			\draw [thin, black] (-0.3,0.3)  -- (2.7,0.3);
			\draw [thin, black] (-0.3,-0.3)  -- (2.7,-0.3);
			
			\draw [thick, black, dotted] (-0.3,0.3)  -- (-0.3,-0.3);
			\draw [thick, black, dotted] (0.3,0.3)  -- (0.3,-0.3);
			\draw [thick, black, dotted] (0.9,0.3)  -- (0.9,-0.3);
			\draw [thick, black, dotted] (2.1,0.3)  -- (2.1,-0.3);
			\draw [thick, black, dotted] (1.5,0.3)  -- (1.5,-0.3);
			\draw [thick, black, dotted] (2.7,0.3)  -- (2.7,-0.3);		
			
			\draw [thin, black,  dashed, line width = 0.7mm] (2.7,0.3)  -- (3.3,0.3);
			\draw [thin, black,  dashed, line width = 0.7mm] (2.7,-0.3)  -- (3.3,-0.3);
			\draw [thin, black,  dashed] (-0.3,0.3)  -- (-0.9,0.3);
			\draw [thin, black,  dashed] (-0.3,-0.3)  -- (-0.9,-0.3);
			\draw [thick, black, line width = 0.7mm] (0.87,0.3)  -- (2.77,0.3);
			\draw [thick, black, line width = 0.7mm] (0.87,-0.3)  -- (2.77,-0.3); 
			\draw [thick, black, line width = 0.7mm] (0.9,0.3)  -- (0.9,-0.3);
		\end{tikzpicture}
		\subcaption{$H_1 = \emptyset$} \label{fig:PositiveCounterexample}

	\end{subfigure}

	\begin{subfigure}[b]{0.25\textwidth}

		\begin{tikzpicture} 
\tikzstyle{Initial} = [pattern=north east lines, pattern color=black!70]
  \tikzstyle{InitialNode} = [Initial, minimum width=2* \varname cm, minimum height=2*\varname cm, node distance=\varname cm, text width=0.3cm,align=right,text height=0.3cm]
			\tikzstyle{Unsafe} = [fill = gray!40!white] 
			\tikzstyle{Safe} = [fill = gray!40!white] 
			\tikzstyle{Node} = [ Unsafe, minimum width=2*\varname cm, minimum height=2*\varname cm, node distance=2*\varname cm, text width=0.3cm,align=right,text height=0.3cm]
			\tikzstyle{SafeNode} = [Node, Safe]

			\node [label={[xshift=-0.25cm]-2},xshift=-0.25cm, name=n0, Node]{0};
			\node[label={[xshift=-0.25cm]-1},name=n1, right of=n0, Node]{1} ;
			\fill [gray!40!white] (2*\varname,\varname) rectangle +(2*\varname,-2*\varname);			
			\node[label={[xshift=-0.25cm]0},xshift=0.25cm,name=n2, right of=n1, InitialNode]{0} ;
			\node[label={[xshift=-0.25cm]1},name=n3, right of=n2, SafeNode]{1};

			\shade[left color=gray!40!white, right color=gray!10!white] (8*\varname,-1*\varname) rectangle +(2*\varname,2*\varname);
			\shade[right color=gray!40!white, left color=gray!10!white] (-2*\varname,-1*\varname) rectangle +(-2*\varname,2*\varname);
			\node[label={[xshift=-0.25cm]2},name=n4, right of=n3, SafeNode]{0} ;
			\node [label={[xshift= -0.75cm,yshift=0.15cm]3},right of =n4]{};
			\node [label={[xshift= -0.4cm,yshift=0.2cm]...},right of =n4]{};
			\node [label={[xshift= 0.4cm,yshift=0.2cm]...},left of =n0]{};		 
			\draw [thin, black] (-2*\varname,\varname)  -- (8*\varname,\varname);
			\draw [thin, black] (-2*\varname,-\varname)  -- (8*\varname,-1*\varname);

			\draw [thick, black, dotted] (-2*\varname,\varname)  -- (-0.5,-0.25);
			\draw [thick, black, dotted] (2*\varname,\varname)  -- (2*\varname,-1*\varname);
			\draw [thick, black, dotted] (4*\varname,\varname)  -- (4*\varname,-1*\varname);
			\draw [thick, black, dotted] (8*\varname,\varname)  -- (8*\varname,-1*\varname);
			\draw [thick, black, dotted] (6*\varname,\varname)  -- (6*\varname,-1*\varname);
			\draw [thick, black, dotted] (0.0,\varname)  -- (0.0,-1*\varname);	
			
			\draw [thin, black,  dashed, line width = 0.7mm] (8*\varname,\varname)  -- (10*\varname,\varname);
			\draw [thin, black,  dashed, line width = 0.7mm] (8*\varname,-1*\varname)  -- (10*\varname,-1*\varname);
			\draw [thin, black,  dashed] (-2*\varname,\varname)  -- (-4*\varname,\varname);
			\draw [thin, black,  dashed] (-2*\varname,-\varname)  -- (-4*\varname,-\varname);
			\draw [thick, black, line width = 0.7mm] (2*\varname,\varname)  -- (8*\varname,\varname);
			\draw [thick, black, line width = 0.7mm] (2*\varname,-\varname)  -- (8*\varname,-\varname); 
			\draw [thick, black, line width = 0.7mm] (2*\varname,\varname)  -- (2*\varname,-\varname);
		\end{tikzpicture}
		\subcaption{$H_2 = V\;\;\;\;$}
		 \label{fig:NegativeCounterexample}
	\end{subfigure}
	\hskip -0.05\linewidth
	\begin{subfigure}[b]{0.25\textwidth}
		\begin{tikzpicture}
\tikzstyle{Initial} = [pattern=north east lines, pattern color=black!70]
  \tikzstyle{InitialNode} = [Initial, minimum width=0.5cm, minimum height=0.5cm, node distance=0.25cm, text width=0.3cm,align=right,text height=0.3cm]
			\tikzstyle{Unsafe} = [fill = white]] 
			\tikzstyle{Safe} = [fill = gray!40!white] 
			\tikzstyle{Node} = [ Unsafe, minimum width=0.5cm, minimum height=0.5cm, node distance=0.5cm, text width=0.3cm,align=right,text height=0.3cm]
			\tikzstyle{SafeNode} = [Node, Safe]

			\node [label={[xshift=-0.25cm]-2},xshift=-0.25cm, name=n0, Node]{0};
			\node[label={[xshift=-0.25cm]-1},name=n1, right of=n0, Node]{1} ;
			\fill [gray!40!white] (0.5,0.25) rectangle +(0.5,-0.5);			
			\node[label={[xshift=-0.25cm]0},xshift=0.25cm,name=n2, right of=n1, InitialNode]{0} ;
			\node[label={[xshift=-0.25cm]1},name=n3, right of=n2, Node]{1};

			\node[label={[xshift=-0.25cm]2},name=n4, right of=n3, Node]{0} ;
			\node [label={[xshift= -0.75cm,yshift=0.15cm]3},right of =n4]{};
			\node [label={[xshift= -0.4cm,yshift=0.2cm]...},right of =n4]{};
			\node [label={[xshift= 0.4cm,yshift=0.2cm]...},left of =n0]{};	 
			\draw [thin, black] (-0.5,0.25)  -- (2.0,0.25);
			\draw [thin, black] (-0.5,-0.25)  -- (2.0,-0.25);

			\draw [thick, black, dotted] (-0.5,0.25)  -- (-0.5,-0.25);
			\draw [thick, black, dotted] (0.5,0.25)  -- (0.5,-0.25);
			\draw [thick, black, dotted] (1.0,0.25)  -- (1.0,-0.25);
			\draw [thick, black, dotted] (2.0,0.25)  -- (2.0,-0.25);
			\draw [thick, black, dotted] (1.5,0.25)  -- (1.5,-0.25);
			\draw [thick, black, dotted] (0.0,0.25)  -- (0.0,-0.25);	
			
			\draw [thin, black,  dashed, line width = 0.7mm] (2.0,0.25)  -- (2.5,0.25);
			\draw [thin, black,  dashed, line width = 0.7mm] (2.0,-0.25)  -- (2.5,-0.25);
			\draw [thin, black,  dashed] (-0.5,0.25)  -- (-1.0,0.25);
			\draw [thin, black,  dashed] (-0.5,-0.25)  -- (-1.0,-0.25);
			\draw [thick, black, line width = 0.7mm] (0.5,0.25)  -- (2.0,0.25);
			\draw [thick, black, line width = 0.7mm] (0.5,-0.25)  -- (2.0,-0.25); 
			\draw [thick, black, line width = 0.7mm] (0.5,0.25)  -- (0.5,-0.25);
		\end{tikzpicture}
		\subcaption{$H_3 = [0, 1)$} \label{fig:ExistentialCounterexample}
	\end{subfigure}%

	\begin{subfigure}[b]{0.25\textwidth}

		\begin{tikzpicture} 
\tikzstyle{Initial} = [pattern=north east lines, pattern color=black!70]
  \tikzstyle{InitialNode} = [Initial, minimum width=0.5cm, minimum height=0.5cm, node distance=0.25cm, text width=0.3cm,align=right,text height=0.3cm]
			\tikzstyle{Unsafe} = [fill = white] 
			\tikzstyle{Safe} = [fill = gray!40!white] 
			\tikzstyle{Node} = [ Unsafe, minimum width=0.5cm, minimum height=0.5cm, node distance=0.5cm, text width=0.3cm,align=right,text height=0.3cm]
			\tikzstyle{SafeNode} = [Node, Safe]

			\node [label={[xshift=-0.25cm]-2},xshift=-0.25cm, name=n0, Node]{0};
			\node[label={[xshift=-0.25cm]-1},name=n1, right of=n0, Node]{1} ;
			\fill [gray!40!white] (0.5,0.25) rectangle +(0.5,-0.5);			
			\node[label={[xshift=-0.25cm]0},xshift=0.25cm,name=n2, right of=n1, InitialNode]{0} ;
			\node[label={[xshift=-0.25cm]1},name=n3, right of=n2, SafeNode]{1};

			\node[label={[xshift=-0.25cm]2},name=n4, right of=n3, Node]{0} ;
			\node [label={[xshift= -0.75cm,yshift=0.15cm]3},right of =n4]{};
			\node [label={[xshift= -0.4cm,yshift=0.2cm]...},right of =n4]{};
			\node [label={[xshift= 0.4cm,yshift=0.2cm]...},left of =n0]{};	
			\draw [thin, black] (-0.5,0.25)  -- (2.0,0.25);
			\draw [thin, black] (-0.5,-0.25)  -- (2.0,-0.25);

			\draw [thick, black, dotted] (-0.5,0.25)  -- (-0.5,-0.25);
			\draw [thick, black, dotted] (0.5,0.25)  -- (0.5,-0.25);
			\draw [thick, black, dotted] (1.0,0.25)  -- (1.0,-0.25);
			\draw [thick, black, dotted] (2.0,0.25)  -- (2.0,-0.25);
			\draw [thick, black, dotted] (1.5,0.25)  -- (1.5,-0.25);
			\draw [thick, black, dotted] (0.0,0.25)  -- (0.0,-0.25);	
			
			\draw [thin, black,  dashed, line width = 0.7mm] (2.0,0.25)  -- (2.5,0.25);
			\draw [thin, black,  dashed, line width = 0.7mm] (2.0,-0.25)  -- (2.5,-0.25);
			\draw [thin, black,  dashed] (-0.5,0.25)  -- (-1.0,0.25);
			\draw [thin, black,  dashed] (-0.5,-0.25)  -- (-1.0,-0.25);
			\draw [thick, black, line width = 0.7mm] (0.5,0.25)  -- (2.0,0.25);
			\draw [thick, black, line width = 0.7mm] (0.5,-0.25)  -- (2.0,-0.25); 
			\draw [thick, black, line width = 0.7mm] (0.5,0.25)  -- (0.5,-0.25);
		\end{tikzpicture}
		\subcaption{$H_4 = [0, 2)$} \label{fig:UniversalCounterexample}
	\end{subfigure}
	\hskip -0.05\linewidth
	\begin{subfigure}[b]{0.25\textwidth}
		\begin{tikzpicture} 
\tikzstyle{Initial} = [pattern=north east lines, pattern color=black!70]
  \tikzstyle{InitialNode} = [Initial, minimum width=0.5cm, minimum height=0.5cm, node distance=0.25cm, text width=0.3cm,align=right,text height=0.3cm]
			\tikzstyle{Unsafe} = [fill = white] 
			\tikzstyle{Safe} = [fill = gray!40!white] 
			\tikzstyle{Node} = [ Unsafe, minimum width=0.5cm, minimum height=0.5cm, node distance=0.5cm, text width=0.3cm,align=right,text height=0.3cm]
			\tikzstyle{SafeNode} = [Node, Safe]

			\node [label={[xshift=-0.25cm]-2},xshift=-0.25cm, name=n0, Node]{0};
			\node[label={[xshift=-0.25cm]-1},name=n1, right of=n0, Node]{1} ;
			\fill [gray!40!white] (0.5,0.25) rectangle +(0.5,-0.5);			
			\node[label={[xshift=-0.25cm]0},xshift=0.25cm,name=n2, right of=n1, InitialNode]{0} ;
			\node[label={[xshift=-0.25cm]1},name=n3, right of=n2, SafeNode]{1};
			
			\node[label={[xshift=-0.25cm]2},name=n4, right of=n3, SafeNode]{0} ;
			\node [label={[xshift= -0.75cm,yshift=0.15cm]3},right of =n4]{};
			\node [label={[xshift= -0.4cm,yshift=0.2cm]...},right of =n4]{};
			\node [label={[xshift= 0.4cm,yshift=0.2cm]...},left of =n0]{};	
		
						\draw [thin, black] (-0.5,0.25)  -- (2.0,0.25);
			\draw [thin, black] (-0.5,-0.25)  -- (2.0,-0.25);

			\draw [thick, black, dotted] (-0.5,0.25)  -- (-0.5,-0.25);
			\draw [thick, black, dotted] (0.5,0.25)  -- (0.5,-0.25);
			\draw [thick, black, dotted] (1.0,0.25)  -- (1.0,-0.25);
			\draw [thick, black, dotted] (2.0,0.25)  -- (2.0,-0.25);
			\draw [thick, black, dotted] (1.5,0.25)  -- (1.5,-0.25);
			\draw [thick, black, dotted] (0.0,0.25)  -- (0.0,-0.25);	
			
			\draw [thin, black,  dashed, line width = 0.7mm] (2.0,0.25)  -- (2.5,0.25);
			\draw [thin, black,  dashed, line width = 0.7mm] (2.0,-0.25)  -- (2.5,-0.25);
			\draw [thin, black,  dashed] (-0.5,0.25)  -- (-1.0,0.25);
			\draw [thin, black,  dashed] (-0.5,-0.25)  -- (-1.0,-0.25);
			\draw [thick, black, line width = 0.7mm] (0.5,0.25)  -- (2.0,0.25);
			\draw [thick, black, line width = 0.7mm] (0.5,-0.25)  -- (2.0,-0.25); 
			\draw [thick, black, line width = 0.7mm] (0.5,0.25)  -- (0.5,-0.25);
		\end{tikzpicture}
	
		\subcaption{$H_5 = [0, 3)$} \label{fig:WinningSet}
	\end{subfigure}%
	
	\caption{Hypotheses produced in the course of Example~\ref{ex:learning}.} \label{fig:LearningLoop}
\end{figure}
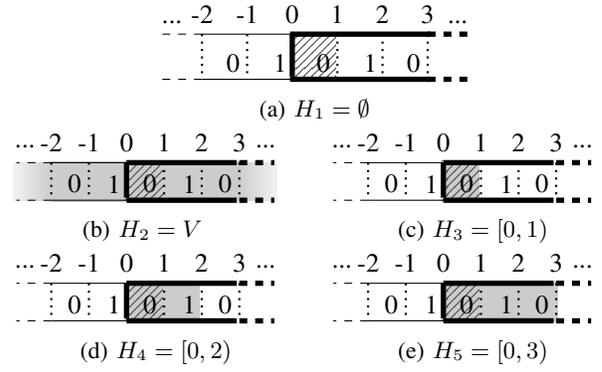

Let us assume that the learner proposes the hypothesis $H_1 = \emptyset$ in the first iteration of the loop, which is shown in Figure~\ref{fig:PositiveCounterexample}.
Since this hypothesis does not include any initial vertex, it does not satisfy Condition~\ref{itm:winning-set:1} of Definition~\ref{def:winning-set}.
Thus, the teacher returns a positive counterexample, say $0 \in I \setminus H$.

Next, suppose that the learner proposes the hypothesis $H_2 = V$, shown in Figure~\ref{fig:NegativeCounterexample}.
This hypothesis is consistent with the positive counterexample.
However, it also includes unsafe vertices and, thus, does not satisfy Condition~\ref{itm:winning-set:2} of Definition~\ref{def:winning-set}.
Hence, the teacher replies with a negative counterexample, say $-1 \in H \setminus F$.

Let us now assume that the learner conjectures $H_3 = [0,1)$, depicted in Figure~\ref{fig:ExistentialCounterexample}.
This conjecture is consistent with both the positive and the negative counterexample.
However, it is not existentially closed because \playera\ has no choice but to move the robot outside of $H_3$.
Therefore, the teacher replies with an existential counterexample, say $0 \rightarrow (-1 \lor 1)$.

Suppose now that the learner proposes the hypothesis $H_4 = [0, 2)$ in the fourth iteration, shown in Figure \ref{fig:UniversalCounterexample}.
Although this conjecture is consistent with all counterexamples received so far, it is not universally closed because \playerb\ can move the robot into the interval $[2, 3)$, which is not included in $H_4$.
Thus, the teacher returns a universal counterexample, say $2 \rightarrow (1 \land 3)$.

Finally, the learner proposes the hypothesis $H_5 = [0, 3)$, depicted in Figure~\ref{fig:WinningSet}.
This hypothesis satisfies all conditions of Definition~\ref{def:teacher}.
Thus, $H_5$ is a winning set and the learning terminates. \exampleend
\end{example}

Our learning framework is straightforward to implement if the underlying game graph is finite, in which case the teacher can be built on top of an explicit representation of the game.
However, if the underlying game graph becomes too large or is infinite, one has to choose a suitable representation for sets of vertices and the edge relation that allows performing operations on the graph symbolically.
More precisely, the chosen symbolic representation must feature Boolean operations (i.e., union, intersection, and complementation), and the image $E(A)$ and preimage $E^{-1}(A)$ of symbolically represented sets $A \subseteq V$ need to be computable.
Moreover, the emptiness problem (i.e., ``given a set $A$, decide whether $A = \emptyset$'') needs to be decidable, and it must be possible to extract an element from $A$ if it is nonempty.
Examples of such symbolic representations include many common formalisms such as finite automata, various types of decision diagrams, and first-order formulas in linear integer arithmetic and real arithmetic.

Furthermore, hypotheses must also be expressible in the chosen symbolic formalism.
However, to build efficient learning algorithms, it is often necessary to restrict the class of hypotheses---typically called the \emph{hypothesis space} and denoted by $\mathcal H$---even further.
Examples of such restricted hypothesis spaces are conjunctive formulas (e.g., as used by the popular Houdini algorithm~\cite{DBLP:conf/fm/FlanaganL01}) or decision trees, which are common, for instance, in learning-based software verification~\cite{DBLP:conf/tacas/ChampionC0S18,DBLP:conf/cav/0001LMN14,DBLP:conf/popl/0001NMR16,DBLP:conf/pldi/ZhuMJ18} and also used in this work.
Note that it might happen that a winning set exists, though it cannot be expressed as a hypothesis in the hypothesis space.
Thus, the choice of the hypothesis space and, hence, the learning algorithm needs to be made carefully in order for the learning to succeed.

A second important property of learning algorithms is what we call ``consistency''.
To make this notion precise, let us assume that the learner accumulates counterexamples in a so-called \emph{game sample} $\mathcal S_G = (\Pos, \Neg, \Ex, \Un)$ consisting of a finite set $\Pos$ of positive counterexamples, a finite set $\Neg$ of negative counterexample, a finite set $\Ex$ of existential counterexamples, and a finite set $\Un$ of universal counterexamples.
Then, we say that a hypothesis $H \subseteq V$ is \emph{consistent} with a game sample $\mathcal S_G = (\Pos, \Neg, \Ex, \Un)$ if
\begin{enumerate*}[label={(\arabic*)}]
	\item $v \in H$ for each $v \in \Pos$,
	\item $v \notin H$ for each $v \in \Neg$,
	\item $v \in H$ implies $\{ v_1, \ldots, v_n \} \cap H \neq \emptyset$ for each $v \rightarrow (v_1 \lor \ldots \lor v_n) \in \Ex$, and 
	\item $v \in H$ implies $\{ v_1, \ldots, v_n \} \subseteq H$ for each $v \rightarrow (v_1 \land \ldots \land v_n) \in \Un$.
\end{enumerate*}
Moreover, we call a learner \emph{consistent} if it always produces a consistent hypothesis.
Consistency is an important property as it prevents the learner from making the same mistake twice and ensures progress towards a winning set.

In fact, the notion of consistency allows us to show that our framework is sound in the sense that any consistent learner learns a winning set in the limit if one exists in the chosen hypothesis space.
This is formalized in the next theorem.

\begin{theorem} \label{thm:framework-sound}
Let a teacher for a safety game (as described in Definition~\ref{def:teacher}) and a consistent learner over an hypothesis space $\mathcal H$ be given.
If there exists a winning set expressible as a hypothesis in $\mathcal H$, then there exists an ordinal $\alpha \in \mathbb O$, where $\mathbb O$ denotes the class of all ordinals, such that the learner proposes a winning set after at most $\alpha$ iterations.
\end{theorem}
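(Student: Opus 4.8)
The plan is to read the feedback loop as a transfinite sequence of hypotheses $H_0, H_1, \dots, H_\beta, \dots$ indexed by ordinals: at stage $\beta$ the learner is handed the game sample $\mathcal S_\beta$ obtained by collecting every counterexample the teacher returned at the stages $\gamma < \beta$ (at a limit stage $\lambda$ one sets $\mathcal S_\lambda = \bigcup_{\gamma < \lambda}\mathcal S_\gamma$, which at an infinite stage may cease to be finite---but finiteness of samples matters only for \emph{implementing} the loop, not for the existence claim we are after). By Definition~\ref{def:teacher} the loop halts at stage $\beta$ precisely when the teacher answers ``yes'', which happens exactly when $H_\beta$ is a winning set. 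Fix, once and for all, a winning set $W \in \mathcal H$, whose existence is hypothesised.

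The first and decisive step would be to prove that $W$ itself is consistent with $\mathcal S_\beta$ at \emph{every} stage $\beta$. Since consistency with a sample is the conjunction of consistency with its individual counterexamples, it suffices to match each of the four counterexample shapes of Definition~\ref{def:teacher} with the corresponding clause of Definition~\ref{def:winning-set}: a positive counterexample $v \in I \setminus H$ has $v \in W$ because $I \subseteq W$; a negative counterexample $v \in H \setminus F$ has $v \notin W$ because $W \subseteq F$; an existential counterexample $v \to (v_1 \lor \dots \lor v_n)$ is only ever issued with $v \in V_0$ and $\{v_1,\dots,v_n\} = E(\{v\})$, so $v \in W$ forces some $v_i \in W$ by existential closedness of $W$; and symmetrically a universal counterexample $v \to (v_1 \land \dots \land v_n)$ with $v \in V_1$ forces $\{v_1,\dots,v_n\} \subseteq W$ by universal closedness. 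From this I would draw two consequences: the learner is never stuck (a consistent hypothesis, namely $W$, always exists, so by assumption the learner always outputs some $H_\beta$), and consequently the loop halts \emph{iff} some proposed $H_\beta$ is a winning set.

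Next I would show that, as long as the loop runs, the proposed hypotheses are pairwise distinct. If $H_\beta$ is not a winning set, the returned counterexample $c$ witnesses a violation of one of Conditions~\ref{itm:winning-set:1}--\ref{itm:winning-set:4} by $H_\beta$, so $H_\beta$ is already inconsistent with $\{c\}$ alone, hence with every $\mathcal S_\gamma \supseteq \{c\}$ for $\gamma > \beta$ (samples only grow, and inconsistency with one counterexample is preserved under enlargement); consistency of the learner then yields $H_\gamma \neq H_\beta$. Combining the two steps: if the loop never proposed a winning set it would never halt, and $\beta \mapsto H_\beta$ would be an injection of the proper class $\mathbb{O}$ of all ordinals into the set $\mathcal H$---impossible. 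Hence the loop halts at some ordinal stage $\alpha$, with $H_\alpha$ a winning set; since the run length is an ordinal that injects into $\mathcal H$, one may in fact take $\alpha < |\mathcal H|^+$.

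I expect the main obstacle to be exactly the first step: this is where the purely syntactic notion of ``consistency'' from Section~\ref{sec:framework} has to be reconnected to the semantic winning-set definition. One has to be careful (i) to exploit that existential and universal counterexamples are only ever emitted at vertices of the right owner (in $V_0$, resp.\ $V_1$), so that the matching closedness clause of $W$ actually applies, and (ii) to use finiteness of $E(\{v\})$ so that those counterexamples are well-formed finite disjunctions/conjunctions to begin with. The transfinite bookkeeping at limit stages and the closing cardinality argument are then routine.
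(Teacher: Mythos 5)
Your proposal is correct and follows essentially the same route as the paper's own (sketched) argument: the teacher's ``honesty'' (your verification that the fixed winning set $W$ remains consistent with every accumulated counterexample, so the learner is never stuck) combined with ``progress'' (every counterexample refutes the current hypothesis, so a consistent learner never repeats one) yields an injection of the run into $\mathcal H$ and hence termination at some ordinal of cardinality at most $|\mathcal H|$. The paper merely cites the abstract learning framework of L\"oding et al.\ for these two properties, whereas you check them directly against Definitions~\ref{def:winning-set} and~\ref{def:teacher}, which is a welcome but not substantively different elaboration.
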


Theorem~\ref{thm:framework-sound} is a consequence of the fact that our learning framework is in instance of an \emph{abstract learning framework for synthesis (ALF)}, as introduced by Löding et al.~\cite{DBLP:conf/tacas/LodingMN16}.
The proof roughly proceeds as follows.
Since the teacher of Definition~\ref{def:teacher} allows ``progress'' (i.e., every counterexample refutes the current
hypothesis) and we assume the learner to be consistent, the learner never conjectures the same hypothesis twice.
In the worst case, the learner will have exhausted all incorrect hypotheses after $\alpha$ iterations for an ordinal $\alpha \in \mathbb O$ with cardinality less or equal to $|\mathcal H|$.
Since the teacher is also ``honest'' (i.e., it does not return spurious counterexamples), the learner necessarily produces a winning set in the subsequent iteration if one exists. 

Finally, let us point out that the safety games as defined in Section~\ref{sec:preliminaries} are very general and even allow encoding computations of Turing machines.
Consequently, determining the winner of such safety games is undecidable in general, and any algorithm  for  computing winning sets can be a semi-algorithm at best (i.e., an algorithm that, on termination, gives the correct answer but does not guarantee to halt).
The algorithm we design in the next section is of this kind.


\section{Learning Decision Trees from Game Samples}
\label{sec:dt-learner}

We now fix the hypothesis space $\mathcal H$ to be the class of all decision trees (as defined shortly) and describe an algorithm to learn consistent decision trees from game samples.
Our algorithm builds on top of a learning algorithm recently proposed by Ezudheen et al.~\cite{DBLP:journals/pacmpl/EzudheenND0M18} in the context of software verification.
To ease the presentation in this section, we abstract from the setting of infinite games and assume that the data to learn from is taken from an abstract domain $\mathcal D$, whose elements we call \emph{data points}.
We encourage the reader to think of data points and vertices as synonyms and define concepts such as game samples and consistency analogously for data points.

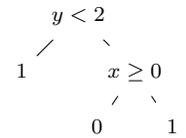
\begin{wrapfigure}[9]{r}{34mm}
\setlength\abovecaptionskip{1\baselineskip}
	\centering
	\vskip -0.5\baselineskip
	\vskip -.3333em 
	\begin{tikzpicture}
		\begin{scope}[every node/.append style={font=\strut\footnotesize}]
			\node (0) at (0, 0) {$y < 2$};
			\node (1) at (-.75, -.75) {$1$};
			\node (2) at (.75, -.75) {$x \geq 0$};
			\node (3) at (.25, -1.5) {$0$};
			\node (4) at (1.25, -1.5) {$1$};
		\end{scope}

		\draw (0) -- (1) (0) -- (2) (2) -- (3) (2) -- (4);
	\end{tikzpicture}
	\vskip -1em
	\caption{A decision tree over $\mathcal D = \mathbb R$ and $\mathcal P = \{ x \geq 0, y < 2 \}$.} \label{fig:decision-tree}
\end{wrapfigure}
An example of a decision tree is shown in Figure~\ref{fig:decision-tree}.
In general, decision trees are binary trees whose inner nodes are labeled with predicates from an a~priori fixed set $\mathcal P$ and whose leaves are labeled with Boolean values.
In this context, each predicate is a function $p \colon \mathcal D \to \mathbb B$ that maps data points to a Boolean values and corresponds to a property of interest.
Typically, the set $\mathcal P$ is finite, but Ezudheen et al.'s algorithm can build decision trees even from infinite sets of predicates if the underlying domain is numeric.
The algorithm we design in this section retains this feature.
\begin{example}


Consider Example \ref{ex:games-and-winning-regions} from Section \ref{sec:framework}.
Remember that in this game a robot moves in an unbounded, one-dimensional grid world. 
Both players can move the robot in an horizontal direction by one cell and are taking turns controlling the robot.
\playera's objective is to keep the robot inside or right of the cell at position $0$.
An illustration of this game is shown in Figure \ref{fig:Example}, where the initial starting position of the robot is shown at $x$.

A possible encoding uses two variables $x\in \mathbb R, y \in \{0,1\}$, as described next.
The variables $x$ corresponds to the $x$-coordinate, while  the variable $y$ indicates which player is currently in control of the robot.
In the case of $y = 0$, \playera{} is in control, thus $V_0= \{(x,y) \in \mathbb{R} \times \{0,1\} \mid y = 0 \}$.
On the other hand, if $z = 1$, then \playerb\ is in control of the robot, thus $V_1 = \{ (x,y) \in \mathbb{R} \times \{0,1\} \mid y = 1 \}$.
Since the starting position is arbitrary within the interval $[0,1)$ which is controlled by \playera , which means that the set of initial vertices is $I = \{(x,y) \in \mathbb{R} \times \{0,1\} \mid x \in [0,1) \land y = 0 \}$.
Furthermore, we fixed the set of safe vertices to be at position greater or equal $0$, thus $F = \{(x,y) \in \mathbb{R} \times \{0,1\} \mid x \geq 0 \}$.
Finally, it is left to define the edge relation.
The robots movement to the right, for instance, can be defined by the relation $R \coloneqq \{ ((x_1,y_1),(x_2,y_2)) \in (\mathbb{R} \times \{0,1\})^2  \mid  y_2 = 1 - y_1, x_2 = x_1+1$.
The direction to the left can be encoded analogously, and the edge relation $E$ is the union of both directions.

Figure \ref{fig:decisiontree-boxgame} depicts the set $\{x \in [0,3)\}$ of Example \ref{ex:learning}.
We first observe that the predicate at the root node is $x < 0$.
Since $x$ determines the position of the robot,  the left subtree (i.e., where $x < 0$ is true) encodes vertices with position $ x < 0 $, while the right subtree (i.e., where $x < 0$ is false) encodes vertices with position $ x\geq 0$.

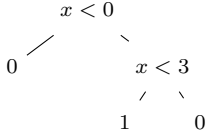
\begin{figure}[h]
\centering
	\begin{tikzpicture}
		\begin{scope}[every node/.append style={font=\strut\footnotesize}]
			\node (0) at (0, 0) {$x < 0$};
			\node (1) at (-1, -.75) {$0$};
			\node (2) at (1, -.75) {$x < 3$};
			\node (3) at (.5, -1.5) {$1$};
			\node (4) at (1.5, -1.5) {$0$};
		\end{scope}

		\draw (0) -- (1) (0) -- (2) (2) -- (3) (2) -- (4);

	\end{tikzpicture}
			\caption{Winning set of Example \ref{ex:games-and-winning-regions} represented as decision tree. }\label{fig:decisiontree-boxgame}	
\end{figure}

Let us first consider the left subtree.
There is only one path leading to the unique leaf node labeled with $0$.
All of these vertices are not safe and correspond to the white area left of position $x = 0$ in Figure~\ref{fig:WinningSet}.
Conversely, only the vertices satisfying $x \geq 0$ and $x < 3$ lead to the unique leaf node labeled with $1$ in the right subtree.
This corresponds to the gray-shaded area in Figure~\ref{fig:WinningSet}.
All of these vertices are safe.
It is not hard to verify that \playera\ can stay within these sets, whereas \playerb\ cannot force a play to an outside vertex.
Moreover, the set of initial vertices is included.
Thus, the learned decision tree in fact encodes a winning set.

Note that in the definition of the winning set requires $W \subseteq V$ to hold.
However, this is not true in this example since the decision tree evaluates to $1$ even in situations where $y \notin \{ 0, 1 \}$, while $V$ contains only vertices with $y \in \{0,1\}$.
Note that this is not a problem in our example since the only checks that require $W \subseteq V$ to hold true are the checks for existential and universal closedness.
Those checks require the winning set $W$ to be intersected with either $V_0$ or $V_1$, which effectively ignores all vertices that violate $y \in \{ 0, 1 \}$.

In conclusion, the winning set depicted in Figure \ref{fig:WinningSet} consists of the 3 cells in the interval $[0,3)$.
The decision tree of Figure~\ref{fig:decisiontree-boxgame} describes this set in an easy to understand manner.

\end{example}
In the remainder of this section, we view a decision tree $t$ as a representation of an (infinite) set $D(t) \subseteq \mathcal D$ of data points.
Whether a data point $d \in \mathcal D$ belongs to this set depends on its \emph{valuation} $t(d) \in \mathbb B$, which is defined as follows: starting at the root node, we recursively descend left (right) if $d$ satisfies (does not satisfy) the predicate at the current node and define $t(d)$ to be the label of the leaf node that is ultimately reached by this procedure.
The set of data points represented by $t$ is then simply the set $D(t) = \{ d \in \mathcal D \mid t(d) = 1 \}$.
Since the learner proposes this set as a hypothesis to the teacher, we call a decision tree $t$ consistent with a game sample $\mathcal S_G$ if $D(t)$ is consistent with $\mathcal S_G$.

Ezudheen et al.'s algorithm has been developed in the context of software verification and expects so-called Horn samples as input.
Formally, a \emph{Horn sample} is a finite set $\mathcal S_H$ containing \emph{Horn constraints} of the form $(d_1 \land \ldots \land d_n) \rightarrow d$ or $(d_1 \land \ldots \land d_n) \rightarrow \mathit{false}$ where $i \in \mathbb N$ and $d, d_1, \ldots, d_n \in \mathcal D$ are data points.
Note that the left-hand-side of a Horn constraint might be empty (i.e., $i=0$), in which case it is interpreted as $\mathit{true}$.

Given a Horn sample $\mathcal S_H$, Ezudheen et al.'s algorithm learns a decision tree $t$ that is consistent with $\mathcal S_H$ in the sense that 
\begin{enumerate*}[label={(\arabic*)}]
	\item $\{ d_1, \ldots, d_n \} \subseteq D(t)$ implies $d \in D(t)$ for each Horn constraint of the form $(d_1 \land \ldots \land d_n) \rightarrow d$ in $\mathcal S_H$ and
	\item $\{ d_1, \ldots, d_n \} \not\subseteq D(t)$ for each Horn constraint of the form $(d_1 \land \ldots \land d_n) \rightarrow \mathit{false}$ in $\mathcal S_H$;
\end{enumerate*}
by extension, we then also say that $D(t)$ is consistent with $S_H$.
Should no consistent decision tree exist (e.g., because the set $\mathcal P$ does not contain sufficient predicates to separate positive and negative examples), the algorithm aborts with an error.
The theorem below summarizes these key properties.

\begin{theorem}[Ezudheen et al.~\cite{DBLP:journals/pacmpl/EzudheenND0M18}] \label{thm:horn-dt}
Let $\mathcal S_H$ be a Horn sample and $\mathcal P$ a finite set of predicates, both over the domain $\mathcal D$.
Moreover, let $n$ be the number of data points and $k$ the number of Horn constraints in $\mathcal S_H$.
If a decision tree over $\mathcal P$ exists that is consistent with $\mathcal S_H$, then Ezudheen et al.'s algorithm learns one in time $\mathcal O(n |\mathcal P| + n^2k)$, assuming that predicates can be evaluated in constant time.
\end{theorem}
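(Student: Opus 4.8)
The plan is to analyze Ezudheen et al.'s decision-tree learner, which grows a tree top-down by repeatedly splitting a leaf, and to establish the claim in two largely independent parts: (a) \emph{partial correctness and termination} --- the algorithm halts with a tree $t$ such that $D(t)$ is consistent with $\mathcal S_H$ whenever some decision tree over $\mathcal P$ with that property exists; and (b) the \emph{running-time bound}. Throughout, I would maintain the invariant that the current tree induces a partition of the $n$ data points of $\mathcal S_H$ into its leaf groups, so that fixing a Boolean label for every leaf is the same as choosing a labelling of this partition.

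For part (a), the first step is to reduce the question ``does the current tree admit a labelling consistent with $\mathcal S_H$?'' to Horn satisfiability over one propositional variable per leaf: a constraint $(d_1 \land \cdots \land d_m) \to d$ (respectively $(d_1 \land \cdots \land d_m) \to \mathit{false}$) becomes a Horn clause over the leaves that currently contain $d_1, \ldots, d_m$ and $d$, and unit propagation either yields a satisfying labelling --- in which case the algorithm outputs $t$ and $D(t)$ is consistent by construction --- or derives a contradiction. In the latter case the algorithm selects a leaf that participates in the conflict and splits it on a predicate of $\mathcal P$. The step that makes this terminate is a \emph{progress lemma}: as long as no consistent labelling exists, some leaf group is not homogeneous with respect to the assumed consistent tree $t^\star$ over $\mathcal P$ (otherwise labelling each leaf by the common $t^\star$-value would be consistent), hence that group contains two data points that $t^\star$ sends to different leaves and that are therefore separated by some predicate of $\mathcal P$; splitting on it strictly refines the partition. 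Since a finite partition of $n$ elements can be strictly refined only finitely often, the algorithm terminates, and its output tree has at most $n$ nonempty leaves and thus $O(n)$ nodes; subject to the progress lemma, the gain-based choice of which legal split to take then only affects the eventual size of the tree, not whether it terminates with a consistent tree.

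For part (b), I would first precompute the value of each predicate on each data point, which costs $O(n |\mathcal P|)$ under the constant-time evaluation assumption and lets any candidate split be evaluated in time proportional to the size of the leaf being split. The algorithm then runs $O(n)$ splitting iterations (one per created node). The dominant cost of an iteration is the Horn unit-propagation pass over the $k$ constraints against the current leaf assignment, which is linear in the total size of the Horn sample and hence $O(nk)$; summed over the $O(n)$ iterations this contributes $O(n^2 k)$. The remaining per-iteration bookkeeping --- recomputing which leaf each data point occupies after a split and scoring the candidate predicates for the conflicted leaf --- is absorbed into the $O(n|\mathcal P|)$ and $O(n^2 k)$ terms by a standard amortization over the tree, since the leaf groups at any fixed depth partition the data points. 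Adding the contributions yields the stated $O(n |\mathcal P| + n^2 k)$.

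The step I expect to be the main obstacle is the progress lemma in part (a), for two reasons: the set of ``forced'' leaf labels produced by unit propagation must be shown to be independent of the propagation order (a routine Horn-SAT fact, but here complicated because a data point migrates to a fresh leaf every time one of its ancestors is split, so the Horn instance changes from iteration to iteration), and one must argue that the leaf the algorithm actually picks --- chosen by a local conflict-and-gain criterion that never refers to $t^\star$ --- always admits a split that genuinely refines the partition, so the algorithm cannot get stuck. Reconciling the algorithm's $t^\star$-oblivious choices with the $\mathcal P$-separability hypothesis is the technical heart of the correctness proof; by contrast, once the $O(n)$ bound on the number of leaves and the one-shot precomputation of predicate values are in place, the complexity accounting in part (b) is routine.
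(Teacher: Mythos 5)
First, a point of comparison: the paper does not prove Theorem~\ref{thm:horn-dt} at all. It is stated as an imported result, attributed to Ezudheen et al.~\cite{DBLP:journals/pacmpl/EzudheenND0M18}, and used strictly as a black box whose only role is to feed into Lemma~\ref{lem:transformation-correct} and Theorem~\ref{thm:game-dt}. So there is no in-paper argument to measure your reconstruction against; what you have written is an attempt to re-derive the analysis from the cited work, which for the purposes of this paper you would not be expected to do.

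Taken on its own terms, your outline has the right shape: top-down growth, the partition-of-data-points invariant, Horn propagation over one propositional variable per leaf, termination by strict refinement of a partition of $n$ elements, and the $\mathcal O(n|\mathcal P|)$ precomputation plus $\mathcal O(nk)$ propagation in each of $\mathcal O(n)$ iterations. But the argument is not closed: the progress lemma, which you yourself single out as the technical heart, is exactly the step you do not prove. Your homogeneity argument shows that \emph{some} leaf admits a separating predicate from $\mathcal P$ whenever the current partition has no consistent labelling; it does not show that the leaf and predicate the algorithm \emph{actually selects} --- chosen by a conflict-and-gain heuristic that never consults $t^\star$ --- yield a non-degenerate split, i.e., one that strictly refines the partition rather than sending the entire leaf group to one side. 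Without that, the bound of at most $n-1$ refinements does not apply and the loop could in principle fail to terminate. The standard way to close this is to build the guard into the algorithm itself: restrict candidate splits to predicates that properly partition the leaf's data points, and observe that under the hypothesis that a consistent tree $t^\star$ over $\mathcal P$ exists, every leaf whose forced labels conflict must contain two points separated by some predicate of $\mathcal P$, so a legal split is always available. You should either state that restriction explicitly as part of the algorithm or prove that the gain criterion never selects a degenerate split; as written, the correctness claim and the $\mathcal O(n)$ iteration count both rest on an unproven assertion. The rest of the complexity accounting in part (b) is routine once that is in place.
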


\begin{algorithm}[t]
	\DontPrintSemicolon
	\KwIn{A game sample $\mathcal S_G = (\Pos, \Neg, \Ex, \Un)$ and a (finite) set $\mathcal P$ of predicates (both over $\mathcal D$)}
	\BlankLine
	
	{Construct a Horn sample $\mathcal S_H$ as follows:
		\begin{itemize}
		\item for each positive example $d \in \Pos$, \\add the Horn constraint $d \rightarrow \mathit{false}$;
		\item for each negative example $d \in \Neg$, \\add the Horn constraint $\mathit{true} \rightarrow d$;
		\item for each existential implication $d \rightarrow (d_1 \lor \ldots \lor d_n) \in \Ex$, add the Horn \\ constraint $(d_1 \land \ldots \land d_n) \rightarrow d$; and
		\item for each universal implication \\ $d \rightarrow (d_1 \land \ldots \land d_n) \in \Un$, \\ add the Horn constraints \\ $d_1 \rightarrow d, \ldots, d_n \rightarrow d$ to $\mathcal S_H$.
	\end{itemize}}
	
	\BlankLine
	Apply Ezudheen et al.'s learning algorithm to learn a decision tree $t_H$ over $\mathcal P$ that is consistent with $\mathcal S_H$.\;
	
	\BlankLine
	Swap the labels of all leaves of $t_H$ to obtain the decision tree $t_G$ satisfying $D(t_G) = \mathcal D \setminus D(t_H)$ (i.e., the decision trees $t_H$ and $t_G$ are structurally identical but every label $b \in \mathbb B$ in $t_H$ is replaced with $1 - b$).\;
	
	\caption{Learning decision trees from game samples} \label{alg:learner}
\end{algorithm}

We are now ready to present our algorithm for learning decision trees from game samples, which is shown in pseudo code as Algorithm~\ref{alg:learner}.
It builds on top of Ezudheen et al.'s algorithm and proceeds in three steps: first, it translates a game sample $\mathcal S_G$ into an ``equivalent'' Horn sample $\mathcal S_H$; then, it applies Ezudheen et al.'s learning algorithm to obtain a decision tree $t_H$ that is consistent with $\mathcal S_H$; finally, it translates $t_H$ into a decision tree $t_G$, which is consistent with $\mathcal S_G$.
More precisely, the Horn sample $S_H$ constructed in Step~1 has the property that for every decision tree $t_H$ the set $D(t_H)$ is consistent with $S_H$ if and only if its complement $\mathcal D \setminus D(t_H)$ is consistent with the game sample $S_G$.
Since the decision tree $t_G$ obtained in Step~3 of our algorithm satisfies $D(t_G) = \mathcal D \setminus D(t_H)$, it is thus consistent with $\mathcal S_G$.
This property is formalized next.

\begin{lemma} \label{lem:transformation-correct}
Let $\mathcal S_G$ be a game sample and $\mathcal P$ a finite set of predicates, both over the domain $\mathcal D$.
Moreover, let $\mathcal S_H$, $t_H$, and $t_G$ be as in Algorithm~\ref{alg:learner}.
Then, $D(t_H)$ is consistent with $\mathcal S_H$ if and only if $D(t_G)$ is consistent with $\mathcal S_G$.
\end{lemma}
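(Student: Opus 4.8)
The plan is to prove the equivalence by a case analysis over the four types of counterexamples occurring in $\mathcal S_G$, matching each against the block of Horn constraints it generates in Step~1 of Algorithm~\ref{alg:learner}. Write $W = D(t_G)$ and $\overline{W} = D(t_H)$; by construction $\overline{W} = \mathcal D \setminus W$, so membership in $\overline{W}$ is exactly non-membership in $W$. Since both ``$D(t_H)$ consistent with $\mathcal S_H$'' and ``$D(t_G)$ consistent with $\mathcal S_G$'' are finite conjunctions of per-constraint conditions, and the translation of Step~1 maps each counterexample in $\mathcal S_G$ to a fixed finite set of Horn constraints, it suffices to show, for each counterexample $c$, that $\overline{W}$ satisfies all Horn constraints generated by $c$ if and only if $W$ satisfies the consistency condition associated with $c$. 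Conjoining these equivalences over all of $\mathcal S_G$ then yields the lemma.

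First I would handle the two ``atomic'' cases. For a positive counterexample $d \in \Pos$, Step~1 adds the Horn constraint $d \rightarrow \mathit{false}$; $\overline{W}$ satisfies it iff $d \notin \overline{W}$, i.e. iff $d \in W$, which is precisely the positive-consistency condition. For a negative counterexample $d \in \Neg$, the constraint $\mathit{true} \rightarrow d$ is added; its (empty) left-hand side is trivially contained in $\overline{W}$, so the constraint is satisfied iff $d \in \overline{W}$, i.e. iff $d \notin W$, which is the negative-consistency condition.

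Next I would treat the implication cases, where the argument is a contraposition through the complement. For an existential counterexample $d \rightarrow (d_1 \lor \dots \lor d_n) \in \Ex$, Step~1 adds $(d_1 \land \dots \land d_n) \rightarrow d$; $\overline{W}$ satisfies this iff $\{d_1, \dots, d_n\} \subseteq \overline{W}$ implies $d \in \overline{W}$, whose contrapositive is ``$d \notin \overline{W}$ implies $\{d_1, \dots, d_n\} \not\subseteq \overline{W}$'', i.e. ``$d \in W$ implies $\{d_1, \dots, d_n\} \cap W \neq \emptyset$'' --- exactly the existential-consistency condition. For a universal counterexample $d \rightarrow (d_1 \land \dots \land d_n) \in \Un$, Step~1 adds the $n$ constraints $d_1 \rightarrow d, \dots, d_n \rightarrow d$; $\overline{W}$ satisfies all of them iff for every $j$, $d_j \in \overline{W}$ implies $d \in \overline{W}$, whose contrapositive, quantified over all $j$, is ``$d \in W$ implies $d_j \in W$ for every $j$'', i.e. ``$d \in W$ implies $\{d_1, \dots, d_n\} \subseteq W$'' --- the universal-consistency condition.

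The only calculations involved are the complementation bookkeeping, so I do not expect a genuine obstacle. The one point requiring care is that the existential/universal translation is deliberately asymmetric (a disjunctive counterexample yields a single Horn constraint, a conjunctive one yields several), so one must check that the contrapositive of each generated block reassembles into exactly the intended consistency condition --- and, in particular, that it is the triviality ``the empty conjunction / $\emptyset \subseteq \overline{W}$ always holds'' that makes the $\mathit{true} \rightarrow d$ case collapse correctly.
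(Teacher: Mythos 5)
Your proposal is correct and follows essentially the same route as the paper's proof: a per-counterexample case analysis that translates each consistency condition through the complementation $D(t_G) = \mathcal D \setminus D(t_H)$, with the existential and universal cases handled by (what amounts to) contraposition. No gaps.
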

\label{app:transformation-correct}

\begin{proof}[Proof of Lemma~\ref{lem:transformation-correct}]
Since consistency is a conjunction of conditions for individual examples and Horn constraints, respectively, we show Lemma~\ref{lem:transformation-correct} for each element of a sample individually.
Recall that $D(t_G) = \mathcal D \setminus D(t_H)$.
\begin{itemize}
	\item Let $d \in \Pos$ be a positive example and $d \rightarrow \mathit{false}$ the corresponding Horn constraint generated in Step~1.
	Then,
	\begin{equation*}
	\begin{gathered}
		\text{$t_H$ is consistent with $d \rightarrow \mathit{false}$}  \Leftrightarrow \text{$d \notin D(t_H)$}
		\\ \Leftrightarrow \\ \text{$d \in D(t_G)$}
		 \Leftrightarrow  \text{$t_G$ is consistent with $d \in \Pos$.}
	\end{gathered}
	\end{equation*}
	\item Let $d \in \Neg$ be a negative example and $\mathit{true} \rightarrow d$ the corresponding Horn constraint generated in Step~1.
	Then,
	\begin{equation*}
	\begin{gathered}
		\text{$t_H$ is consistent with $\mathit{true} \rightarrow d$} \Leftrightarrow  \text{$d \in D(t_H)$} 
		\\ \Leftrightarrow \\ \text{$d \notin D(t_G)$} 
		 \Leftrightarrow  \text{$t_G$ is consistent with $d \in \Neg$.}
	\end{gathered}
	\end{equation*}
	\item Let $d \rightarrow (d_1 \lor \ldots \lor d_n) \in \Ex$ be an existential counterexample and $(d_1 \land \ldots \land d_n) \rightarrow d$ the corresponding Horn constraint generated in Step~1.
	Then,
	\begin{equation*}
		\begin{gathered}
		\text{$t_H$ is consistent with $(d_1 \land \ldots \land d_n) \rightarrow d$} \\
		\Leftrightarrow \\
		\text{$\{ d_1, \ldots, d_n \} \subseteq D(t_H)$ implies $d \in D(t_H)$} \\
		\Leftrightarrow \\
		\text{ there exists an $i \in \{ 1, \ldots, n \} $ with} \\  \text{$d_i \notin D(t_H)$ or $d \in D(t_H)$} \\
		\Leftrightarrow \\
		\text{ there exists an $i \in \{ 1, \ldots, n \}$  with} \\ \text{$d_i \in D(t_G)$ or $d \notin D(t_G)$} \\
		\Leftrightarrow \\
		\text{$t_G$ is consistent with $d \rightarrow (d_1 \lor \ldots \lor d_n)$.}
		\end{gathered}
	\end{equation*}
	\item Let $d \rightarrow (d_1 \land \ldots \land d_n) \in \Un$ be a universal implication and $d_1 \rightarrow d, \ldots, d_n \rightarrow d$ the corresponding Horn constraints generated in Step~1.
	Then,
	\begin{equation*}
		\begin{gathered}
		\text{$t_H$ is consistent with $d_1 \rightarrow d, \ldots, d_n \rightarrow d$} \\
		\Leftrightarrow \\
		\text{ $d_i \notin D(t_H)$ for each $i \in \{ 1, \ldots, n \}$ or $d \in D(t_H)$} \\
		\Leftrightarrow \\ 
		\text{ $d_i \in D(t_G)$ for each $i \in \{ 1, \ldots, n \}$ or $d \notin D(t_G)$} \\
		\Leftrightarrow \\
		\text{$t_G$ is consistent with $d \rightarrow (d_1 \land \ldots \land d_n)$.}
		\end{gathered}
	\end{equation*}
\end{itemize}
In total, we obtain that $D(t_G)$ is consistent with $\mathcal S_G$ if and only if $D(t_H)$ is consistent with $\mathcal S_H$.
\end{proof}
The correctness of Algorithm~\ref{alg:learner}, stated in Theorem~\ref{thm:game-dt} below, is now a direct consequence of Lemma~\ref{lem:transformation-correct} and Theorem~\ref{thm:horn-dt}.
Note that the term $n|\Un|$ in the runtime estimation stems from the fact that each universal counterexample might have $n$ data points on its right-hand-side, resulting in $n$ Horn constraints.

\begin{theorem} \label{thm:game-dt}
Let $\mathcal S_G = (\Pos, \Neg, \Ex, \Un)$ be a game sample and $\mathcal P$ a finite set of predicates, both over the domain $\mathcal D$.
Moreover, let $n$ be the number of data points in $\mathcal S_G$.
If a decision tree over $\mathcal P$ exists that is consistent with $\mathcal S_G$, then Algorithm~\ref{alg:learner} learns one in time $\mathcal O\bigl(n |\mathcal P| + n^2(|\Pos| + |\Neg| + |\Ex| + n|\Un|) \bigr)$, assuming that predicates can be evaluated in constant time.
\end{theorem}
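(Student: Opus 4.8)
The plan is to derive Theorem~\ref{thm:game-dt} directly by composing Lemma~\ref{lem:transformation-correct} with Theorem~\ref{thm:horn-dt}, tracking the size parameters of the Horn sample $\mathcal S_H$ constructed in Step~1 of Algorithm~\ref{alg:learner}. First I would observe that, by Lemma~\ref{lem:transformation-correct}, a decision tree over $\mathcal P$ consistent with $\mathcal S_G$ exists if and only if a decision tree over $\mathcal P$ consistent with $\mathcal S_H$ exists (the correspondence is $t_H \mapsto t_G$ with $D(t_G) = \mathcal D \setminus D(t_H)$, and swapping leaf labels preserves the decision-tree structure over the same $\mathcal P$). Hence the hypothesis of Theorem~\ref{thm:game-dt} exactly matches the hypothesis of Theorem~\ref{thm:horn-dt} applied to $\mathcal S_H$, so Ezudheen et al.'s algorithm is guaranteed to return a consistent $t_H$; Step~3 then turns it into a $t_G$ consistent with $\mathcal S_G$, again by Lemma~\ref{lem:transformation-correct}. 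This establishes correctness.

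Next I would bound the running time. The dominant cost is the call to Ezudheen et al.'s algorithm in Step~2, which by Theorem~\ref{thm:horn-dt} runs in $\mathcal O(n' |\mathcal P| + (n')^2 k')$ where $n'$ is the number of data points in $\mathcal S_H$ and $k'$ the number of Horn constraints. Steps~1 and~3 are clearly linear in the size of the input and of $t_H$, so they are absorbed. The key accounting step is: every data point appearing in $\mathcal S_H$ already appears in $\mathcal S_G$ (Step~1 introduces no fresh data points), so $n' \le n$, and we may simply use $n' = \mathcal O(n)$. For the number of Horn constraints, each $d \in \Pos$ contributes one constraint, each $d \in \Neg$ contributes one, each existential counterexample contributes one, and each universal counterexample $d \rightarrow (d_1 \land \ldots \land d_n)$ contributes $n$ constraints (one per conjunct on the right-hand side). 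Therefore $k' = |\Pos| + |\Neg| + |\Ex| + \mathcal O(n |\Un|)$, where the factor $n$ is the bound on the out-degree $|E(\{v\})|$. Substituting $n' = \mathcal O(n)$ and this value of $k'$ into the bound from Theorem~\ref{thm:horn-dt} yields
\begin{equation*}
\mathcal O\bigl(n |\mathcal P| + n^2 (|\Pos| + |\Neg| + |\Ex| + n |\Un|)\bigr),
\end{equation*}
which is exactly the claimed bound.

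The only genuinely delicate point is the bookkeeping with the symbol $n$, which is overloaded: in Theorem~\ref{thm:game-dt} it is declared to be ``the number of data points in $\mathcal S_G$'', but it is also implicitly the bound on the right-hand-side length of existential/universal counterexamples, since those lengths equal $|E(\{v\})|$ and a data point that appears in a counterexample is counted among the data points of $\mathcal S_G$. So the number of data points is always at least the maximum fan-out occurring in the sample, and using the single symbol $n$ for both quantities is sound (it can only weaken the bound). Making this identification explicit — that every $d_i$ on the right of a universal constraint is itself a data point of $\mathcal S_G$, hence the $n$ copies per universal counterexample are bounded by $n$ in the sense of Theorem~\ref{thm:game-dt} — is the one step I would write out carefully; everything else is a mechanical substitution. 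I would close by noting, as the paper already remarks, that the $n |\Un|$ term inside the parentheses is precisely the contribution of these multi-constraint translations of universal counterexamples.
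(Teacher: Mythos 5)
Your proposal is correct and follows essentially the same route as the paper: the paper also derives Theorem~\ref{thm:game-dt} as a direct consequence of Lemma~\ref{lem:transformation-correct} and Theorem~\ref{thm:horn-dt}, noting only that the $n|\Un|$ term arises because each universal counterexample expands into up to $n$ Horn constraints. Your explicit bookkeeping of the data-point count in $\mathcal S_H$ and the overloaded use of $n$ is a faithful (and somewhat more careful) elaboration of the paper's one-line argument.
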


If the abstract domain $\mathcal D$ is numeric, say $\mathcal D \subseteq \mathbb R^n$ for some $n \in \mathbb N$, a simple extension of Ezudheen et al.'s algorithm, and hence Algorithm~\ref{alg:learner}, can be used to learn decision trees even over infinite sets of predicates. 
This extension was originally proposed by Quinlan~\cite{DBLP:books/mk/Quinlan93} and assumes the set of predicates to be $\mathcal P^\star = \{ d[i] \leq c \mid c \in \mathbb R, i \in \{1, \ldots, n\} \}$ where $d[i]$ denotes the $i$-th component of the data point $d$.
The key idea is in fact straightforward: to separate two data points $d_1, d_2 \in \mathcal D$ (e.g., a positive and negative data point), it suffices to consider only such predicates $d[i] \leq c$ for which the constant $c$ occurs as an actual value in $d_1$ or $d_2$.
Thus, Algorithm~\ref{alg:learner} can always restrict itself to a finite subset of $\mathcal P^\star$, which only depends on the values of the data points in the given sample.

In many situations, Algorithm~\ref{alg:learner} in fact guarantees that the overall feedback loop of Section~\ref{sec:framework} converges to a winning set in finite time if one can be expressed as a decision tree.
For instance, if the set $\mathcal P$ is finite, then there exist only finitely many semantically different decision trees (in terms of the set of data points they represent).
Since Algorithm~\ref{alg:learner} always produces consistent decision trees, it will have exhausted all incorrect trees after a finite amount of time.
If a winning set exists and can be expressed as a decision tree over $\mathcal P$, then the subsequent hypothesis will necessarily be one.
On the other hand, if the underlying domain is numeric and Algorithm~\ref{alg:learner} operates over the infinite set $\mathcal P^\star$, a technique proposed by Ezudheen et al.\ can be used to guarantee that a winning set will be learned in finite time (if one can be expressed as a decision tree over $\mathcal P^\star$).
Due to the limited space, we have to refer the reader to Ezudheen et al.~\cite{DBLP:journals/pacmpl/EzudheenND0M18} for details and can here only state our main result.

\begin{theorem}
Let $\mathcal G$ be a safety game.
If the learner of Section~\ref{sec:framework} uses Algorithm~\ref{alg:learner} over a finite set $\mathcal P$ of predicates, then it is guaranteed to learn a winning set after a finite number of iterations if there exists one that is expressible as a decision tree over $\mathcal P$.
The analogous statement holds for the set $\mathcal P^\star$.
\end{theorem}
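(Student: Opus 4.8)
The plan is to combine the soundness of the abstract learning framework (Theorem~\ref{thm:framework-sound}) with a finiteness argument on the hypothesis space of decision trees. First I would instantiate Theorem~\ref{thm:framework-sound} with the teacher of Definition~\ref{def:teacher} and the learner given by Algorithm~\ref{alg:learner}. To apply that theorem I need two ingredients: (i) Algorithm~\ref{alg:learner} is a \emph{consistent} learner over the hypothesis space $\mathcal H$ of decision trees over $\mathcal P$, and (ii) a winning set expressible in $\mathcal H$ exists. Ingredient~(i) follows immediately from Lemma~\ref{lem:transformation-correct} together with Theorem~\ref{thm:horn-dt}: given the accumulated game sample $\mathcal S_G$, Algorithm~\ref{alg:learner} builds the Horn sample $\mathcal S_H$, invokes Ezudheen et al.'s algorithm to obtain $t_H$ consistent with $\mathcal S_H$, and returns $t_G$ with $D(t_G) = \mathcal D \setminus D(t_H)$; by Lemma~\ref{lem:transformation-correct} the set $D(t_G)$ is consistent with $\mathcal S_G$. (I should note that Theorem~\ref{thm:horn-dt} only guarantees a consistent tree \emph{if one exists}, so I must argue that whenever a winning set expressible over $\mathcal P$ exists, every intermediate game sample admits a consistent decision tree over $\mathcal P$ --- this holds because the true winning set itself is always consistent with every game sample the honest teacher of Definition~\ref{def:teacher} produces, since all counterexamples are genuine.) Ingredient~(ii) is exactly the hypothesis of the theorem.

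Next I would strengthen the ``at most $\alpha$ iterations'' conclusion of Theorem~\ref{thm:framework-sound} to ``finitely many iterations'' by bounding $|\mathcal H|$. The key observation is that, although $\mathcal D$ may be infinite, the number of \emph{semantically distinct} decision trees over a \emph{finite} predicate set $\mathcal P$ is finite: a decision tree's represented set $D(t)$ is determined by the finite Boolean combination of the finitely many predicates in $\mathcal P$, so there are at most $2^{2^{|\mathcal P|}}$ distinct sets $D(t)$, hence finitely many semantically different hypotheses. Since the teacher allows progress (every counterexample refutes the current hypothesis) and the learner is consistent, the learner never repeats a hypothesis up to semantic equivalence --- a repeated $D(t)$ would again violate the same counterexample. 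Therefore the learner exhausts all incorrect hypotheses after finitely many iterations, and by honesty of the teacher the next hypothesis must be a winning set, provided one expressible over $\mathcal P$ exists. This proves the first claim.

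For the analogous statement over $\mathcal P^\star = \{\, d[i] \le c \mid c \in \mathbb R,\ i \in \{1,\dots,n\}\,\}$, the naive cardinality argument fails because $\mathcal P^\star$ is infinite, so I would instead invoke the convergence technique of Ezudheen et al.~\cite{DBLP:journals/pacmpl/EzudheenND0M18}. The point, already sketched in the paragraph preceding the theorem, is that when learning from a given sample the algorithm only ever needs the finitely many threshold constants that actually occur as component values of data points in that sample; moreover Ezudheen et al.\ equip the numeric learner with a mechanism (a canonical choice of thresholds, e.g.\ midpoints, plus a fairness/enumeration discipline) ensuring that the sequence of conjectures does not oscillate forever among semantically distinct trees when a target over $\mathcal P^\star$ exists. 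I would cite that result to conclude that, with Algorithm~\ref{alg:learner} lifting the numeric learner via the sample-to-Horn transformation (which is purely syntactic and does not touch the data points' numeric values, so it preserves the relevant finiteness), the feedback loop terminates with a winning set whenever one is expressible as a decision tree over $\mathcal P^\star$.

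The main obstacle I anticipate is the $\mathcal P^\star$ case: the finiteness-of-hypothesis-space shortcut is unavailable, so the argument genuinely rests on the internal convergence guarantees of Ezudheen et al.'s numeric extension, and I must be careful to check that the Horn-sample transformation in Step~1 of Algorithm~\ref{alg:learner} does not introduce new data points with new numeric values (it does not --- every data point in $\mathcal S_H$ already appears in $\mathcal S_G$) and that swapping leaf labels in Step~3 preserves the ``trees over $\mathcal P^\star$'' property (it does, since only labels change, not predicates). Modulo correctly importing that external convergence result, the rest is routine bookkeeping. The finite-$\mathcal P$ case, by contrast, is essentially immediate from the cardinality bound and Theorem~\ref{thm:framework-sound}.
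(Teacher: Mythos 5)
Your proposal is correct and follows essentially the same route as the paper's own (very brief) argument: for finite $\mathcal P$, finitely many semantically distinct trees plus consistency and teacher progress force exhaustion of all incorrect hypotheses, after which honesty of the teacher yields a winning set; for $\mathcal P^\star$, the convergence technique of Ezudheen et al.\ is imported. Your write-up is in fact more careful than the paper's sketch --- in particular your observation that the target winning set remains consistent with every accumulated game sample (so Ezudheen et al.'s subroutine never aborts) and your check that the Horn-sample transformation introduces no new numeric values are details the paper leaves implicit.
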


In practice, the overall runtime depends not only on the learner but also on the teacher (which might be adversarial). 
Hence, an average case analysis requires further assumptions on the teacher.
We leave such an analysis for future work and turn to an experimental evaluation instead.


\tikzset{
  head/.style = {fill = orange!90!blue,
                 label = center:\textsf{\Large H}},
  tail/.style = {fill = blue!70!yellow, text = black,
                 label = center:\textsf{\Large T}}
}

\section{Experimental Evaluation}
\label{sec:experiments}

To assess the performance of our learning framework and the decision tree learning algorithm, we have implemented a prototype named \dtsynth.\footnote{Code/benchmarks available at \url{https://github.com/OliverMa1/DT-Synth}.}
In this section, we describe \dtsynth in detail and compare it to three existing tools: \consynth~\cite{DBLP:conf/popl/BeyeneCPR14}, which is based on constraint solving, as well as two tools based on learning finite automata~\cite{DBLP:conf/tacas/NeiderT16}, for brevity here called \satsynth and \rpnisynth.

\subsection{Tools}
\dtsynth takes games as input that are encoded as quantifier-free, first-order formulas in the theory of linear integer arithmetic (LIA).
More precisely, variables $x_1, \ldots, x_n$ of type integer encode vertices of a game graph, while formulas $\varphi(x_1, \ldots, x_n)$ over those variables represent (infinite) sets of vertices.
Additionally, the edge relation of a game graph is encoded by a formula $\psi(x_1, \ldots, x_n, x'_1, \ldots, x'_n)$ where $x_1, \ldots, x_n$ represents a source-vertex and $x'_1, \ldots, x'_n$ a destination-vertex.
We believe that this encoding of games is highly accessible and allows modeling many real-word synthesis tasks (e.g., robotic motion planning) in a natural way.
Moreover, the decision trees computed by \dtsynth are generally easy for humans to comprehend, as we demonstrate in Section \ref{sec:dt-learner}.
Note, however, that the choice of LIA restricts our game graphs to countably many vertices.
We have made this choice deliberately to be able to reuse  Ezudheen et al.'s implementation of the decision tree learner~\cite{DBLP:journals/pacmpl/EzudheenND0M18}. 
\dtsynth implements the teacher and learner as follows:
\paragraph*{Teacher}
	The teacher builds on top of the Z3 SMT solver~\cite{DBLP:conf/tacas/MouraB08}.
	Upon receiving a hypothesis in the form of a formula $\varphi_H(x_1, \ldots, x_n)$, it performs a series of satisfiability checks according to Definition~\ref{def:teacher} in order to search for counterexamples.
	If a satisfiability check succeeds, the teacher derives a counterexample from the model returned by Z3 (potentially triggering a finite number of additional satisfiability checks to compute the successors of a vertex in the case of existential and universal counterexamples). 
\paragraph*{Learner}
	The learner builds on top of code originally developed by Ezudheen et al.~\cite{DBLP:journals/pacmpl/EzudheenND0M18}.
	In addition to the set $\mathcal P^\star$ of predicates described in Section~\ref{sec:dt-learner}, it additionally uses octagonal constraints of the form $x_i \pm x_j \leq c$ (which capture distances between vertices in the Euclidean space).
	After a consistent decision tree has been learned, the learner converts it into a formula $\varphi_H \coloneqq \bigvee_{\pi \in \Pi}\bigwedge_{\psi \in \pi} \psi$ where $\Pi$ is the set of all paths from the root to a leaf labeled with $\mathit{true}$, and $\psi \in \pi$ denotes that the predicate $\psi$ occurs on $\pi$ (negated if the path descends to the right).
	Then, it hands $\varphi_H$ over to the teacher.
	Following the description in Section~\ref{sec:dt-learner}, it is not hard to verify that a vertex satisfies the formula $\varphi_H$ obtained from the tree $t$ if and only if the vertex belongs to $D(t)$. \par\vskip\baselineskip

We compared \dtsynth to three other tools: \consynth, \satsynth and \rpnisynth.
Table~\ref{tab:tool-comparison} summarizes the main similarities and differences of these tools.

\begin{table}
	\caption{Properties of the compared tools} \label{tab:tool-comparison}
	\centering
	\begin{tabular}{l@{\hskip 2em}cccc}
		\toprule
		\multirow{3}{*}{Tool}  & Easy to & Easy to & Guarantees & No help\\
		& model & interpret & to find a & of user \\
		& games & solution & strategy & required\\
		\midrule
		\dtsynth & \ding{51} & \ding{51} & \ding{51} & \ding{51} \\
		\consynth~\cite{DBLP:conf/popl/BeyeneCPR14} & \ding{51} & \ding{55} &\ding{51}& \ding{55} \\
		\satsynth~\cite{DBLP:conf/tacas/NeiderT16} & \ding{55} & \ding{55} & \ding{51} & \ding{51} \\
		\rpnisynth~\cite{DBLP:conf/tacas/NeiderT16} & \ding{55} & \ding{55} & \ding{55} & \ding{51} \\
		\bottomrule
	\end{tabular}
\end{table}

\consynth~\cite{DBLP:conf/popl/BeyeneCPR14} is a tool for synthesizing winning strategies in games over infinite graphs with $\omega$-regular winning conditions (which are more general than the safety games considered in this paper).
The tool reduces the computation of a winning strategy into a satisfiablity problem for Constrained Horn Clauses (CHCs)~\cite{DBLP:conf/cav/BeyenePR13}, which allows it to leverage the power of modern SMT solvers.
Similar to \dtsynth, \consynth's input is a game (with potentially unaccountably many vertices) encoded as an SMT formula.
However, to remove existential quantifiers that arise from translating the synthesis problem into CHCs, \consynth relies on so-called \emph{Skolem templates}, which have to be specified by the user (we discuss the impact of Skolem templates later in this section).
Intuitively, these templates constrain the search space of potential strategies and, hence, require insight into what a winning strategy might be (and the ability to express this insight as a Skolem template).
By contrast, \dtsynth does not require additional help from the user (it synthesizes winning sets\slash strategies based on the game alone), but it can also not benefit from the user's knowledge.
The output of \consynth is values for the parameters in the Skolem template (encoding a winning strategy) and a corresponding winning set in form of an SMT formula.
Both objects can be challenging for humans to understand.
Similar to \dtsynth, \consynth guarantees to find a winning strategy if one can be expressed in terms of the Skolem template.

\satsynth and \rpnisynth~\cite{DBLP:conf/tacas/NeiderT16} are two techniques for synthesizing winning strategies of safety games over game graphs with countably many vertices.
Both tools expect the game encoded by means of finite automata/transducers and leverage automata learning techniques to compute winning sets: \satsynth internally uses a SAT solver to learn automata, wheres \rpnisynth is based on the popular RPNI learning algorithm~\cite{RPNI}.
Similar to \dtsynth and unlike \consynth, both tools do not require additional information from the user.
However, both the input and output of \satsynth and \rpnisynth are finite automata, which is not a very natural encoding of games and can be difficult for humans to understand.
Finally, \satsynth guarantees to find a winning set (if it can be expressed as an automaton), whereas \rpnisynth does not.

\begin{table*}[t]
	\caption{
		Experimental results on the \satsynth/\rpnisynth benchmark suite~\cite{DBLP:conf/tacas/NeiderT16} (upper part) and \consynth benchmark suite~\cite{DBLP:conf/popl/BeyeneCPR14} (lower part).
		``Iter.'' refers to the number of iterations in the counterexample-guided feedback loop.
		``Size'' measures the size of the final decision tree learned by \dtsynth in terms of inner nodes and the size of the final automata produced by \satsynth and \rpnisynth in terms of the number of states, respectively.
		``---'' indicates a timeout after $900\,s$.
	}
	\label{tab:experiment1}

	\centering
\begin{tabular}{lrrr@{\hskip 2em}rrr@{\hskip 2em}rrr@{\hskip 2em}r}
\toprule
	& \multicolumn{3}{c}{\hskip -1em \dtsynth} & \multicolumn{3}{c}{\hskip -1em \satsynth} & \multicolumn{3}{c}{\hskip -1em \rpnisynth} & \consynth \\ \cmidrule(lr{2em}){2-4} \cmidrule(l{-0.25em}r{2em}){5-7} \cmidrule(l{-0.25em}r{2em}){8-10} \cmidrule(lr){11-11}
	Game  & Time in s& Iter.& Size &Time in s &	Iter.& Size & Time in s& 		Iter.& Size  & Time in s\\
	\midrule
Box &$ 0.77$ & $9$ & $5$ &$1.22$ & $45$ & $5$ &$0.41$ & $16$ & $6$ & $3.71$\\
Box Limited & $0.28$ & $4$ & $2 $ &$0.74$ & $36$ & $4  $&$0.32$ & $15$ & $5$ & $0.44$ \\
Diagonal & $2.04$ & $23 $& $5$  & $1.49$ & $64$ & $4$ & $1.01$ & $64$ & $4$  & $1.93$\\ 
Evasion & $0.63$ & $6$ &$ 3$  &$89.15$ & $237$ & $7$ & $1.40$ & $82$ & $11$ & $1.50$\\ 
Follow & $0.86$ & $11$ & $5$ & $95.90$ & $300$ & $7$  &$ 7.36$ & $352$ & $16$ &---\\ 
Solitary Box & $0.24$ & $4$ & $2$ &$5.81$ & $76$ & $6$  &$0.42$ & $16$ & $6$  & $0.42$\\
Square $5x5$ & $7.77$ & $61$ & $12$  &\multicolumn{3}{c}{---------------------------}& $0.82$ & $39$ & $14$ & --- \\
\midrule
Cinderella ($c=2$) & \multicolumn{3}{c}{---------------------------}  & \multicolumn{3}{c}{---------------------------} & \multicolumn{3}{c}{---------------------------}&---\\
Cinderella ($c=3$) & \multicolumn{3}{c}{---------------------------}  & \multicolumn{3}{c}{---------------------------} & \multicolumn{3}{c}{---------------------------} &$765.30$\\
Program-repair & $0.99$ & $14$ & $11$  &$1.42$ & $69$ & $3$  & $0.15$ & $7$ & $3$ & $2.50$ \\
Repair-critical & $23.74$ & $237$ & $14$   & \multicolumn{3}{c}{---------------------------} & $130.54$ & $1772$ & $11$ &$19.52$ \\
Synth-Synchronization & $63.30$ & $513$ & $35$  & \multicolumn{3}{c}{---------------------------} & $42.74$ & $858$ & $26$ &$10.01$ \\
	\bottomrule
\end{tabular}
\end{table*}
\subsection{Benchmarks}
We have evaluated the performance of all four tools on two benchmark suits, both featuring safety games over infinite game graphs.%
\footnote{Note that our approach is designed for infinite graphs but not to compete with highly-optimized synthesis engines on games over finite graphs. Hence, we did not consider benchmarks from the various synthesis competitions.}
The first benchmark suite accompanies \satsynth/\rpnisynth~\cite{DBLP:conf/tacas/NeiderT16} and consists of seven safety games, which are motivated by  robotic motion planning.
The second suite of benchmarks is shipped with \consynth~\cite{DBLP:conf/popl/BeyeneCPR14} and consists of five safety games: two versions of a combinatorial puzzle (Cinderella game), two program repair problems, and a synchronization problem for multi-threaded programs.

\label{app:benchmark-description}

\paragraph*{First Benchmark Suite}
Our first benchmark suit comprises the following seven games, taken from Neider and Topcu~\cite{DBLP:conf/tacas/NeiderT16}.
Most games are from the area of robotic motion planning, and all are over infinite graphs.
\begin{description}[font={\normalfont\itshape}]
	\item[Diagonal game:] A robot moves in an infinite, discrete two-dimensional grid world. \playera\ controls the robot's vertical movement, while \playerb\ controls the horizontal. \playera\ wins if the robot stays within two cell around the diagonal.
	\item[Box game:] A variation of the diagonal game. Both players can move the robot in an vertical, horizontal or diagonal direction by one cell. \playera\ wins if the robot stays within a horizontal stripe of width three.
	\item[Limited Box game:] A variation of the box game. \playera\ can only control the robot's vertical movement and \playerb\ the horizontal.
	\item[Solitary box game:] Another variation of the Box game in which only \playera\ is in control of the robot.
	\item[Evasion game:] Two robots are moving in an infinite, discrete two-dimensional grid world. The robots take turns moving at most one cell in any direction. Each players controls one robot. \playera's objective is to avoid getting caught by \playerb's robot.
	\item[Follow game:] A version of the evasion game where \playera's goal is to keep its robot within a Manhattan distance of two cells to the environment's robot.
	\item[Square game:] A variation of the box game, where \playera\ wins if the robot stays within a fixed size square (here $5 \times 5$).
\end{description}

\paragraph*{Second Benchmark Suite}
Our second benchmark suit comprises the following five games, taken from  Beyene et al.~\cite{DBLP:conf/popl/BeyeneCPR14}.
Most games are from the area program synthesis and program repair, thus involve one player only.
\begin{description}[font={\normalfont\itshape}]
	\item[Cinderella game:] 
	The Cinderella game is a turn-based game, originally posed as a challenge for the synthesis community~\cite{DBLP:conf/ifipTCS/BodlaenderHKSWZ12}.
	It involves the protagonist, the mythical Cinderella, and the antagonist, her stepmother.
	The game starts with five empty buckets, which are arranged in a circle and can hold up to a constant $c$ units of water.
	In every round of the game, the stepmother brings one unit of water, which she distributes arbitrarily among the five buckets.
	Then, Cinderella is allowed to empty two adjacent buckets.
	The stepmother wins if she can make one of the buckets overflow.
	Cinderella, on the other hand, wins if she can indefinitely prevent all buckets from overflowing.
	Beyene et al.\ considered two versions of the game for $c=2$ and $c=3$, respectively, and have discretized the game such that the stepmother can distribute the water only in $0.1$ units.
	\item[Program repair game:] The program-repair game looks for a modification of statements such that the modified program satisfies its specification.
	\item[Repair-critical game:] The Repair-critical game is a game derived from concurrent program repair problems under fairness assumption.
	\item[Synthesis Synchronization game:] This game is an example for synthesis of synchronization in multi-threaded programs.
\end{description}

For the first benchmark suite, we have equipped \consynth with moderately restrictive Skolem templates; templates for the second suite were provided by Beyene et al.
The representations of winning sets\slash strategies for all four tools are expressive enough for all games in these benchmark suits.

\subsection{Results}
Table~\ref{tab:experiment1} lists the experimental results for all four tools on the \satsynth/\rpnisynth benchmark suite~\cite{DBLP:conf/tacas/NeiderT16} (upper part) and the \consynth benchmark suite~\cite{DBLP:conf/popl/BeyeneCPR14} (lower part).%
\footnote{Note that we were not able to reproduce the exact results for \consynth reported by Beyene et al.~\cite{DBLP:conf/popl/BeyeneCPR14}. \consynth has numerous options, which influence the performance of the benchmarks drastically. We have been in contact with the authors to obtain the exact options that were used in their experiments. Unfortunately, some of these options resulted in crashes and incorrect results on our machine. Hence, we resorted to the default options.}
In the case of \dtsynth, \rpnisynth, and \satsynth, Table~\ref{tab:experiment1} also shows the number of iterations as well as the size of the result (measured in the number of inner nodes in a decision tree and the number of states of an automaton, respectively).
We have conducted all experiments on an Intel Xeon E7-8857 v2 CPU with $4$\,GB of RAM running a 64-bit Debian operating system.
The timeout was $900\,s$.

We rank the performance of the tools based on the number of games they can solve.
To break ties, we consider the aggregate runtime on games that the tools were able to solve (i.e., not accounting for time-outs).
With this scoring scheme, \dtsynth performed best and solved ten out of twelve games with an aggregated runtime of $100.62\,s$.
\rpnisynth ranked second, having solved the same ten games but with an aggregated runtime of $185.17\,s$ (i.e., $1.8$ times slower than \dtsynth).
\consynth ranked third and solved nine games with an aggregated runtime of $805.33\,s$.
Finally, \satsynth performed worst, having solved seven games with an aggregate runtime of $195.73\,s$.

Compared to \rpnisynth and \satsynth, \dtsynth required far fewer iterations and, hence, fewer interactions with the typically computationally expensive teacher.
Moreover, the size of the final output is smaller, which makes it easier for humans to interpret.
It is also important to emphasize that despite \rpnisynth's good performance, the tool does not guarantee to find a winning set.
By contrast, \dtsynth provides such a guarantee.

Note that \consynth was the only tool able to solve at least one version of the Cinderella game.
We believe that this is due to the somewhat restrictive Skolem template that Beyene et al.\ have provided: if equipped with less restrictive templates, \consynth times out as well.

In conclusion, \dtsynth is competitive to the state-of-the-art tools for solving safety games over infinite graphs.
It does not require any user guidance, guarantees to find a winning set (if one can be expressed as a decision tree), and features easy-to-understand input/output formats.


\subsection{Impact of skolem templates}
\label{app:cinderella-game}
To assess the impact of Skolem templates on the performance of \consynth, we have conducted a case study based on the Cinderella game.
This case study consists of a series of Skolem templates for Cinderella's strategy that successively permit more and more complex behavior.
Our goal is to determine the point at which Skolem templates become too permissive and \consynth is no longer able solve the game within a reasonable time frame.

In the following, we explain the Skolem templates provided by Beyene et al.~\cite{DBLP:conf/popl/BeyeneCPR14}, describe our less restrictive templates in detail, and finally discuss the outcome of our experiments with these templates.
For the remainder of this subsection, we focus on the game with $c = 3$ as it has a winning strategy for Cinderella that is easy to understand.

\paragraph{Skolem templates and the Cinderella game}
Beyene et al.'s formulation of (safety) games in terms of Constrained Horn Clauses (CHCs) relies on existential quantifiers to express the effects of actions of \playera.
In order to eliminate these quantifiers and make the resulting formulas amenable to constraint solving, \consynth uses so-called Skolem templates.
Intuitively, Skolem templates are user-provided formulas that capture high-level intuitions about potential winning strategies and, hence, restrict the space of strategies that \consynth has to consider.

In the case of the Cinderella game with $c=3$ the Skolem template provided by Beyene et al.\ only allows quite restrictive strategies: the choice of buckets that Cinderella has emptied in the current round always dictates the choice of buckets that she will empty in the next round, completely ignoring which buckets the stepmother actually fills.
In fact, it turns out that continuously emptying Buckets~$1$ and $2$, then Buckets~$2$ and $3$, and then Buckets~$4$ and $5$ is a winning strategy.

Intuitively, Beyene et al.'s Skolem template can be seen as an encoding of a finite-state machine with five states $\{ s_1, \ldots, s_5 \}$, where the state $s_i$ prescribes that Cinderella empties Buckets $i$ and ${i+1} \mod 5$ (e.g., the winning strategy above corresponds to the finite-state machine depicted in Figure~\ref{subfig:skolem-templates-original}).
In the Skolem template, the transitions of the finite-state machine are left undefined, and solving the game amounts to directing the transitions in a way that the resulting strategy is winning for Cinderella.
However, without solving the game first, it is very hard to know beforehand that a winning strategy of this form actually exists. 

\begin{figure}
	\centering
	\begin{subfigure}{.485\textwidth}
		\centering
		\begin{tikzpicture}
			\begin{scope}[every state/.append style={font=\small}]
				\node[state] (1) at (0, 0)  {$s_1$};
				\node[state] (2) at (1.5, 0)  {$s_2$};
				\node[state] (3) at (3, 0)  {$s_3$};
				\node[state] (4) at (4.5, 0)  {$s_4$};
				\node[state] (5) at (6, 0)  {$s_5$};
				\draw[<-, shorten >=0pt, shorten <=1pt] (1.west) -- +(-.3, 0);
				\path[->] (1) edge (2);
				\path[->] (2) edge[bend left=45] (4);
				\path[->] (4) edge[bend left=45] (1);
			\end{scope}
		\end{tikzpicture}
		\subcaption{Beyene et al.'s original Skolem template with transitions corresponding to a winning strategy} \label{subfig:skolem-templates-original}
	\end{subfigure}
	\begin{subfigure}{.485\textwidth}
		\centering
		\begin{tikzpicture}
			\begin{scope}[every state/.append style={font=\small}]
				\node[state] (1) at (0, 0)  {$s_1$};
				\node[state] (1a) at (1.5, 0)  {$s'_1$};
				\node[state] (2) at (3, 0)  {$s_2$};
				\node[state] (3) at (4.5, 0)  {$s_3$};
				\node[state] (4) at (6, 0)  {$s_4$};
				\node[state] (5) at (7.5, 0)  {$s_5$};
				\draw[<-, shorten >=0pt, shorten <=1pt] (1.west) -- +(-.3, 0);
				\path[->] (1) edge (1a);
				\path[->] (1a) edge (2);
				\path[->] (2) edge[bend left=45] (4);
				\path[->] (4) edge[bend left=45] (1);
			\end{scope}
		\end{tikzpicture}
		\subcaption{An extended Skolem template} \label{subfig:skolem-templates-extended}
	\end{subfigure}
	\caption{Visualization of Skolem templates for the Cinderella game}
\end{figure}
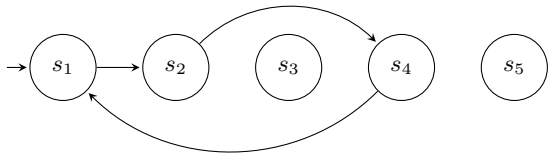
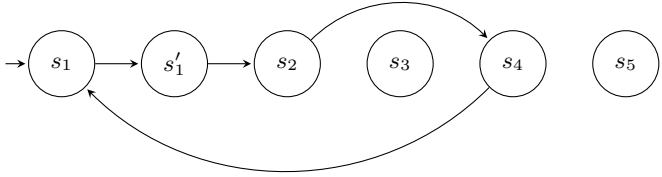

\paragraph{Constructing less restrictive templates} 
Generating less restrictive Skolem templates is now straightforward:
instead of five states, we successively increase the number of states, thus allowing for incrementally more complex behavior of Cinderella.
For instance, the finite-state machine in Figure~\ref{subfig:skolem-templates-extended} prescribes Cinderella to empty Buckets $1$ and $2$ twice before proceeding to emptying Buckets $2$ and $3$ as well as Buckets $4$ and $5$.
Note, however, that the strategy obtained from Beyene et al.'s original Skolem template always remains possible (as long as the encoded finite-state machine contains at least the states $s_1$, $s_2$, and $s_4$).

\paragraph{Impact of less restrictive Skolem templates}
We have run \consynth on the Cinderella game with $c=3$ for a series of Skolem templates that encode finite-state machines of increasing size.
Recall that \consynth computed a winning strategy using Beyene et al.'s original template (consisting of five states) in $765.30\,s$.
For a template encoding six states (one more state in which Cinderella can empty Buckets $1$ and $2$), \consynth computed a winning strategy in $1209.59\,s$
However, already for a template encoding seven states (the six-state template above plus an additional state in which Cinderella can empty Buckets $3$ and $4$), \consynth was unable to find a winning strategy within $48\,h$.
In particular, note that giving Cinderella the option to empty Buckets $3$ and $4$ twice increases the search space for strategies unnecessarily because Cinderella never requires  to empty those two adjacent buckets in order to win.

The Skolem templates used in \consynth are  powerful tools to facilitate the computation of winning strategies, even for difficult safety games.
However, designing such templates usually requires (high-level) knowledge about the form of winning strategies and the ability to express this knowledge in terms of logical formulas.
Moreover, our experiments show that using even slightly suboptimal templates (e.g., by allowing one or two too additional moves for \playera) can lead to an increase in computational time that is prohibitive in practice.


\section{Conclusion}

We have developed a machine learning framework for synthesizing reactive safety controllers whose interaction with their environment is modeled by games over infinite graphs.
Moreover, we have designed a learning algorithm for decision trees that learns winning sets\slash strategies and is in many situations guaranteed to find a solution if one exists.
Our experimental evaluation shows that our approach is highly competitive and promises applicability to a wide range of interesting practical problems, specifically due to its ease of use.

A promising direction for future work would be to apply our technique to distributed synthesis problems and other, more complex synthesis settings.
Moreover, we plan to extend our learning-based framework to more general winning conditions, such as reachability and liveness.

\section*{Acknowledgment}
This work was partially funded by the ERC Starting Grant AV-SMP (grant agreement no.\ 759969)

\bibliographystyle{IEEEtran}
\bibliography{bib}

\end{document}